\newcommand{\x}{X}
\newcommand{\y}{Y}
\newcommand{\Tx}{\theta}
\newcommand{\Dx}{\partial}
\newcommand{\bigO}{{\mathcal{O}}}
\newcommand{\bigOsoft}{\tilde{\mathcal{O}}}
\newcommand{\MM}{\mathsf{MM}}
\def\OMul#1#2#3{\langle #1,#2 \rangle_{#3}}
\def\MMul#1#2#3{\langle #1,#2,#3 \rangle}
\newcommand{\sC}{\mathsf{C}}
\newcommand{\sM}{\mathsf{M}}
\newcommand{\sT}{\mathsf{T}}
\newcommand{\bK}{\mathbb{K}}
\newcommand{\bN}{\mathbb{N}}
\newcommand{\bQ}{\mathbb{Q}}
\newcommand{\bZ}{\mathbb{Z}}
\newcommand{\sF}{\mathsf{F}}
\def\gathen#1{{#1}}
\def\hoeven#1{{#1}}
\newtheorem{lemma}{Lemma}
\newtheorem{theorem}{Theorem}
\newtheorem{proposition}{Proposition}
\def\tM{\tilde M}
\begin{document}

\title{Products of Ordinary Differential Operators by Evaluation~and~Interpolation}
\numberofauthors{3}

\author{
\alignauthor Alin Bostan\\
\affaddr{Algorithms Project-Team, INRIA Paris-Rocquencourt}\\
\affaddr{78153 Le Chesnay (France)}\\
\email{Alin.Bostan@inria.fr}
\alignauthor Fr\'ed\'eric Chyzak \\
\affaddr{Algorithms Project-Team, INRIA Paris-Rocquencourt}\\
\affaddr{78153 Le Chesnay (France)}\\
\email{Frederic.Chyzak@inria.fr}
\alignauthor Nicolas Le Roux\\
\affaddr{Algorithms Project-Team, INRIA Paris-Rocquencourt}\\
\affaddr{78153 Le Chesnay (France)}\\
\email{Nicolas.Le\_Roux@inria.fr}
}

\date{\today}
\maketitle

\begin{abstract}
It is known that multiplication of linear differential operators over ground fields of characteristic zero can be reduced to a constant number of matrix products.
We give a new algorithm by evaluation and interpolation which is faster than the previously-known one by a constant factor, and
prove that in characteristic zero, multiplication of differential operators and of matrices are computationally equivalent problems.
In positive characteristic, we show that differential operators can be multiplied in nearly optimal time. 
Theoretical results are validated by intensive experiments.
\end{abstract}

\vspace{1mm}
 \noindent
 {\bf Categories and Subject Descriptors:} \\
\noindent I.1.2 [{\bf Computing Methodologies}]:{~} Symbolic and Algebraic
  Manipulation -- \emph{Algebraic Algorithms}
 
 \vspace{1mm}
 \noindent
 {\bf General Terms:} Algorithms, Theory
 
 \vspace{1mm}
 \noindent
 {\bf Keywords:} Fast algorithms, differential operators.

\section{Introduction}

Multiplication in polynomial algebras $\bK[\x]$ and $\bK[\x,\y]$ over a field~$\bK$ has been intensively studied in the computer-algebra literature.
Since the discovery of Karatsuba's algorithm and the Fast Fourier Transform, hundreds of articles have been dedicated to theoretical and practical issues;
see, e.g., \cite[Ch.~8]{GaGe99}, \cite{Bernstein}, and the references therein.
Not only are many other operations built upon multiplication, but often their complexity can be expressed in terms of the complexity of multiplication---whether as a constant number of multiplications or a logarithmic number of multiplications.
In~$\bK[\x]$, this is the case for Euclidean division, gcd and resultant computation, multipoint evaluation and interpolation, shifts, certain changes of bases, etc.

In the noncommutative setting of linear ordinary differential operators, the study is by far less advanced.
The complexity of the product has been addressed only recently, by van~der~Hoeven in the short paper~\cite{vdHoeven02}:
multiplication of operators over ground fields~$\bK$ of characteristic zero can be reduced by an evaluation-interpolation scheme to a constant number~$C$ of matrix multiplications with elements in~$\bK$.
Work in progress~\cite{LCLMs} suggests that linear algebra is again the bottleneck for computations of GCRDs and LCLMs.

This work aims at deepening the study started in~\cite{vdHoeven02} for characteristic~0.
We improve van der Hoeven's result along several directions:
We make the constant factor~$C$ explicit in~\S\ref{ssec:MatToOper} and improve it in~\S\ref{sec:better-constants}, and we prove in~\S\ref{sec:equiv} that multiplication of matrices and of differential operators are equivalent computational problems---that is, they share the same exponent, 
thus answering the question left open in~\cite[\S6, Remark~2]{vdHoeven02}.
As usual, those results hold for sufficiently large characteristic as well.
We prolong the study to the case of (small) positive characteristic, by giving in~\S\ref{sec:positive-char} an algorithm for computing the product of two differential operators in softly quadratic complexity, that is, nearly optimally in the output size.
This indicates that the equivalence result may fail to generalize to arbitrary fields.

In what follows, the field $\bK$ has characteristic zero, unless stated otherwise.
$\bK[\x]\langle\Dx\rangle$ and $\bK[\x]\langle\Tx\rangle$ respectively denote the associative algebras $\bK\langle\x,\Dx;\Dx\x=\x\Dx+1\rangle$ and $\bK\langle\x,\Tx;\Tx\x= \x(\Tx+1)\rangle$.

\vspace{-0.3cm}
\begin{table}[ht]
\begin{center}
\setlength{\tabcolsep}{3pt}
\begin{tabular}[t]{c|cc|ccc}
 & \textsf{vdH${}_\Tx$}  & \textsf{IvdH${}_\Tx$} & \textsf{vdH} & \textsf{IvdH} & \textsf{MulWeyl}\\
\hline
Product by blocks & 37 & 24 & 96 & 48 & 12 \\
\hline
Zeros + Strassen  & 20 &  8 & 47 & 12 &  8
\end{tabular}
\caption{\label{table:MM} Number of $n \times n$ matrix products for multiplication in $\bK[\x]\langle\Tx\rangle$, resp.~$\bK[\x]\langle\Dx\rangle$, in bidegree $(n,n)$.}
\end{center}
\end{table}

\vspace{-0.5cm}
Table~\ref{table:MM} encapsulates our improvements on the constant~$C$.
It displays the cost of linear algebra in van der Hoeven's algorithms (\textsf{vdH${}_\Tx$}, resp.~\textsf{vdH})
and in the improved versions (\textsf{IvdH${}_\Tx$}, resp.~\textsf{IvdH}), which are described in~\S\ref{ssec:OperToMat}, resp.~\S\ref{ssec:vdH_D}, and in our algorithm (\textsf{MulWeyl}) in~\S\ref{sec:MulWeyl}.
The subscript~$\Tx$ refers to multiplication in $\bK[\x]\langle\Tx\rangle$;
its absence means a product in $\bK[\x]\langle\Dx\rangle$. 
The first row provides bounds on the number of $n \times n$ matrix products used in each algorithm for multiplying operators in $\bK[\x]\langle\Dx\rangle$, resp.~$\bK[\x]\langle\Tx\rangle$, of degree at most $n$ in $\x$ and in $\Dx$, resp.~$\Tx$,
under the naive complexity estimate~\eqref{eq:naive-abc} below. 
This estimate reflects the choice of multiplying rectangular matrices by decomposing them into square blocks.
The second row gives tighter bounds under the assumptions that: \emph{(i)\/} any product by a zero block is discarded; \emph{(ii)\/} when possible, a product of two $2\times2$ matrices of $n\times n$ blocks is computed as 7~block products, instead of~8, by using Strassen's algorithm~\cite{Strassen69};
\emph{(iii)\/} predicted non-trivial zero blocks in the output are not computed.

\paragraph*{Canonical form and bidegree}
In the algebra $\bK[\x]\langle\Dx\rangle$, resp.~$\bK[\x]\langle\Tx\rangle$, the commutation rule allows one to rewrite any given element into a so-called \emph{canonical form\/} with $\x$~on the left of monomials and $\Dx$, resp.~$\Tx$, on the right, that is, as a linear combination of monomials~$\x^i\Dx^j$, resp.~$\x^i\Tx^j$, for uniquely-defined coefficients from~$\bK$.
In either case, we speak of an element of bidegree $(d,r)$, resp.~at most $(d,r)$, when the degree of its canonical form in~$\x$ is~$d$, resp.~at most~$d$, and that in~$\Dx$, resp.~$\Tx$, is~$r$, resp.~at most~$r$.
With natural notation, the bidegree $(d_C,r_C)$ of a product~$C=BA$ clearly satisfies $r_C=r_A+r_B$ and $d_C\leq d_A + d_B$.

The problem of computing the canonical form of the product of two elements of bidegree~$(d,r)$ from $\bK[\x]\langle\Dx\rangle$, resp.\ from $\bK[\x]\langle\Tx\rangle$, given in canonical form, is denoted $\OMul dr\Dx$, resp.\ $\OMul dr\Tx$.

\paragraph*{Complexity measures}
All complexity estimates are given in terms of arithmetical operations in~$\bK$, which we denote ``ops.''
We denote by $\sC_\Tx, \sC_\Dx:\bN \rightarrow \bN$ two functions such that Problems $\OMul nn\Dx$ and~$\OMul nn\Tx$ can be solved in $\sC_\Dx(n)$ and $\sC_\Tx(n)$, respectively.
We denote by $\sM: \bN \rightarrow \bN$ a function such that polynomials of degree at most~$n$ in~$\bK[\x]$ can be multiplied in $\sM(n)$~ops.
Using Fast Fourier Transform algorithms, $\sM(n)$ can be taken in $\bigO(n \log n)$ over fields with suitable roots of unity, and $\bigO(n \log n\,\log\log n)$ in the general case~\cite{ScSt71,CaKa91}. 
We use the notation $f \in \bigOsoft(g)$ for $f,g:\bN \rightarrow \bN$ if $f$ is in $\bigO(g \log^mg)$ for some $m\geq 1$.
For instance, $\sM(n)$ is in $\bigOsoft(n)$.
The problem of multiplying an $m \times n$ matrix by an $n \times p$ matrix is written $\MMul mnp$. 
We let $\MM: \bN^3 \rightarrow \bN$ be a function such that Problem  $\MMul mnp$ can be solved in $\MM(m,n,p)$~ops.
We use the abbreviation $\MM(n)$ for $\MM(n,n,n)$.
The current tightest (strict) upper bound 2.376 for~$\omega$ such that $\MM(n)\in\bigO(n^\omega)$ is derived in~\cite{CoWi90}.
For the time being, this estimate is only of theoretical relevance.
Few practical algorithms with complexity better than cubic are currently known  for matrix multiplication, among which
Strassen's algorithm~\cite{Strassen69} with exponent $\log_27 \approx 2.807$ and the Pan--Kaporin algorithm~\cite{Kaporin04} with exponent $2.776$.
For rectangular matrix multiplication,
we shall use the estimate
\begin{equation}\label{eq:naive-abc}
\MM(an,bn,cn) \leq abc \, \MM(n),\qquad\text{for $a,b,c \in \bN$,}
\end{equation}
obtained by performing the naive product of $a\times b$ by $b\times c$ matrices whose coefficients are $n\times n$ blocks.

Furthermore, we assume that $\sM(n)$, $\MM(n)$,  $\sC_\Dx(n)$, and $\sC_\Tx(n)$ satisfy the usual super-linearity assumption of~\cite[\S8.3, Eq.~(9)]{GaGe99} and also that, if $\sF(n)$ is any of these functions,  then $\sF(cn)$ belongs to $\bigO\bigl(\sF(n)\bigr)$, for all positive constants~$c$.

\paragraph*{Useful complexity results}
Throughout, we shall freely use several classical results on the complexity of basic polynomial operations.
They are encapsulated in Lemma~\ref{cost-results}.
The corresponding algorithms are found in: \cite[Algorithm~E]{GaGe97} for~(a); \cite[Chapter~10]{GaGe99} for~(b); \cite[Th.~2.4 and 2.5]{Gerhard00} for~(c); and \cite[Cor.~8.29]{GaGe99} for~(d).

\begin{lemma}\label{cost-results}
Let\/ $\bK$ be an arbitrary field. Let $a \in \bK$, let $P(\x) \in \bK[\x]$ be of degree less than $n$ and $f, g \in \bK[\x,\y]$ of degree at most $d$ in $\x$ and $n$ in $\y$.
One can perform:
\emph{(a)\/} the Taylor shift $Q(\x):=P(\x+a)$;
\emph{(b)\/} the multipoint evaluation and interpolation of $P$ on $a,a+1,\ldots,a+n$ if the characteristic of\/ $\bK$ is 0 or greater than $n$;
\emph{(c)\/} the base change between the monomial and  the falling factorial basis $(\x)_k=\x(\x-1)\cdots (\x-k+1)$
in $\bigO\bigl(\sM(n) \log n\bigr)$~ops.
Moreover, one computes:
\emph{(d)\/} the product $h=fg$ in $\bigO\bigl(\sM(dn)\bigr)$~ops.
\end{lemma}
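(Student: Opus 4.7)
The plan is to establish each of the four items independently, by reducing it to a small number of polynomial multiplications (possibly with a logarithmic factor) and by pointing to the cited references for the precise algorithms.

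For~(a), I would express $Q(\x)=P(\x+a)$ as a single convolution: writing $P(\x)=\sum_j p_j\,\x^j$, the coefficient of $\x^i$ in $Q$ is $\sum_{j\geq i}\binom{j}{i}a^{j-i}p_j$, which, after scaling by factorials, is (up to reversal) the ordinary convolution of $(j!\,p_j)_j$ with $(a^k/k!)_k$. A single multiplication of polynomials of degree~$n$ then suffices, yielding a cost in $\bigO(\sM(n))\subseteq\bigO(\sM(n)\log n)$; this is Algorithm~E of~\cite{GaGe97}. For~(b), I would Taylor-shift $P$ by~$a$ via~(a), and then use Newton's forward-difference identity $\sum_k R(k)\,t^k/k! = e^t\cdot\sum_i (\Delta^iR(0)/i!)\,t^i$: evaluation of $R:=P(\x+a)$ at $0,1,\ldots,n$ reduces to one base change to the falling-factorial basis (given by~(c)) and one truncated power-series multiplication by $e^t$, for a total of $\bigO(\sM(n)\log n)$; interpolation is symmetrical (multiplying by $e^{-t}$ instead). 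The characteristic assumption is exactly what guarantees invertibility of the factorials up to~$n!$; see \cite[Ch.~10]{GaGe99}.

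For~(c), I would build the subproduct tree over the nodes $0,1,\ldots,n-1$, whose root is $(\x)_n$: each of the $\bigO(\log n)$ levels fits into $\bigO(\sM(n))$ ops, and both directions of the base change reduce to a traversal of this tree together with one multipoint evaluation or interpolation, for a total of $\bigO(\sM(n)\log n)$; this is \cite[Th.~2.4 and~2.5]{Gerhard00}. For~(d), I would use the Kronecker substitution $\y\mapsto\x^{2d+1}$: the inputs $f,g\in\bK[\x,\y]$ become univariate polynomials of degree at most $(2d+1)(n+1)-1=\bigO(dn)$, and the blocks of length $2d+1$ in the univariate product recover the bivariate product~$h$. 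A single call to univariate multiplication then yields $\bigO(\sM(dn))$~ops; see \cite[Cor.~8.29]{GaGe99}.

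No serious obstacle arises, as the lemma is a catalogue of classical fast-arithmetic algorithms assembled for later use; the only delicate point is checking the characteristic hypothesis in~(b), which is precisely the condition ensuring that the factorials divided out by the reversal trick and by Newton's EGF identity are invertible in~$\bK$.
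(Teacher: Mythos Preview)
Your proposal is correct and, in fact, more detailed than the paper: the paper does not prove this lemma at all, but merely cites the same four references (\cite[Algorithm~E]{GaGe97}, \cite[Ch.~10]{GaGe99}, \cite[Th.~2.4--2.5]{Gerhard00}, \cite[Cor.~8.29]{GaGe99}) in the sentence preceding the statement. Your sketches accurately capture the algorithmic content behind each citation.

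One small caveat worth flagging: the convolution-based Taylor shift you describe for~(a) divides by factorials and therefore requires characteristic~$0$ or~$>n$, whereas the lemma is stated over an arbitrary field. To cover small positive characteristic you would need the divide-and-conquer variant (recursively shifting the low and high halves of~$P$ and combining via one multiplication by $(\x+a)^{\lceil n/2\rceil}$), which also sits inside the $\bigO(\sM(n)\log n)$ budget. The paper's bare citation to Algorithm~E of~\cite{GaGe97} is equally silent on this point, so you are not less rigorous than the original; but since the paper later invokes part~(a) in positive characteristic (\S\ref{sec:positive-char}), it is worth being aware of which variant actually applies.
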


\section{Naive algorithms}
\label{sec:naive-algos}

In this section, we provide complexity estimates for several known algorithms for~$\OMul dr\Dx$.
We set
\begin{equation*}
A=\sum_{i=0}^r\sum_{j=0}^da_{i,j}\x^j\Dx^i,\ \
B=\sum_{i=0}^r\sum_{j=0}^db_{i,j}\x^j\Dx^i=\sum_{i=0}^rb_i(\x)\Dx^i.
\end{equation*}
For any~$L=\sum_{i=0}^rl_i(\x)\Dx^i=\sum_{j=0}^d\x^jl'_j(\Dx)$, we define
\begin{equation*}
\frac{dL}{d\x} = \sum_{i=0}^r\frac{dl_i(\x)}{d\x}\Dx^i,\qquad
\frac{dL}{d\Dx} = \sum_{j=0}^d\x^j\frac{dl'_j(\Dx)}{d\Dx}.
\end{equation*}

\paragraph*{Naive expansion}

The most naive calculation of~$BA$ is by expanding each~$\Dx^i\x^l$ in the equality
\begin{equation*}
BA = \sum_{i=0}^r\sum_{j=0}^d\sum_{k=0}^r\sum_{l=0}^db_{i,j}a_{k,l}\x^j\left(\Dx^i\x^l\right)\Dx^k.
\end{equation*}
 Using Leibniz's formula $\Dx^i\x^l = \sum_{k=0}^{\min(i,l)}(l)_k\binom ik \x^{l-k} \Dx^{i-k}$ and the recurrences $(l)_{k+1} = (l)_k (l-k)$ and $\binom i{k+1} = \binom ik  \frac{i-k}{k+1}$, the canonical form of $\Dx^i\x^l$ is computed in $\bigO\bigl(\min(i,l)\bigr)$~ops. 
This induces a complexity $\bigO\bigl(d^2r^2\min(d,r)\bigr)$ for computing~$BA$.
The estimate simplifies to $\bigO(n^5)$ if~$d=r=n$.

\paragraph*{Iterative schemes}
Another calculation is by the formula
\begin{equation}\label{eq:iterative}
BA = \sum_{i=0}^rb_i(\x)\left(\Dx^iA\right)
\end{equation}
and the observation that $\Dx^iA$~has bidegree at most $(d,r+i)$ and is computed from~$\Dx^{i-1}A$ in $\bigO(dr)$~ops.\ by the identity
\begin{equation}\label{eq:DxT}
\Dx T=T\Dx+\frac{dT}{d\x}\qquad\text{for $T=T(\x,\Dx)$}.
\end{equation}
Therefore, the overall complexity is $\bigO\bigl(\sM(d)r^2+dr^2\bigr)=\bigO\bigl(\sM(d)r^2\bigr)$.
When $d=r=n$, this is~$\bigO\bigl(\sM(n)n^2\bigr)$, and $\bigOsoft\bigl(n^3\bigr)$ if FFT is used.
Similar considerations based on
\begin{equation}\label{eq:Tx}
T\x=\x T+\frac{dT}{d\Dx}\qquad\text{for $T=T(\x,\Dx)$}
\end{equation}
provide an algorithm in $\bigO\bigl(d^2\,\sM(r)\bigr)$, and one can always use the better algorithm by first comparing $d$ and~$r$.

Another formula, attributed to Takayama and used in several implementations (Takayama's \textsf{Kan} system~\cite{Kan}; Maple's \textsf{Ore\_algebra} by Chyzak~\cite{Ore_algebra}), is given by the (finite) sum
\begin{equation}\label{eq:takayama}
BA=\sum_{k\geq0}\frac1{k!}\left(\frac{d^kB}{d\Dx^k}\ast\frac{d^kA}{d\x^k}\right),
\end{equation}
where the products~$\ast$ are computed formally as commutative products between canonical forms, the resulting sum being viewed as a canonical form.
Each of the derivatives has bidegree at most $(d,r)$ and the derivative at order $k$ can be computed in $\bigO(dr)$~ops.\ from the one at order~$k-1$.
The complexity is seen to be $\bigO\bigl(\min(d,r)\,\sM(dr)\bigr)$~ops., by Lemma~\ref{cost-results}(d).
When $d=r=n$, this is~$\bigO\bigl(n\,\sM(n^2)\bigr)$, or $\bigOsoft(n^3)$ using FFT;
the scheme~\eqref{eq:iterative} is just a bit better than~\eqref{eq:takayama}.

\section{Equivalence between products of matrices and operators}
\label{sec:equiv}
Let $\bK$ be a field of characteristic zero.
In~\cite{vdHoeven02}, van der Hoeven showed that $\sC_\Tx(n)$ and $\sC_\Dx(n)$ are in $\bigO\bigl(\MM(n)\bigr)$.
When $\omega < 3$, this improves upon the algorithms in~\S\ref{sec:naive-algos}.

In this section, we explain and improve this result along two directions: we make the constant factor explicit  in the estimate  $\sC_\Tx(n)\in \bigO\bigl(\MM(n)\bigr),$ and lessen it.
Then, we prove that $\MMul nnn$, $\OMul nn\Dx$, and $\OMul nn\Tx$ are equivalent computational problems, in a sense made clear below.

\subsection{Product in  $ \bK[\x]\langle \Tx  \rangle$ reduces to matrix product: van der Hoeven's algorithm revisited}\label{ssec:OperToMat}

A differential operator~$A$ in $\bK[\x]\langle\Tx\rangle$ can be viewed as a $\bK$-endomorphism of~$\bK[\x]$, mapping a polynomial~$f$ to~$A(f)$.
As such, it is represented, with respect to the canonical basis $(\x^i)_{i \geq 0}$ of $\bK[\x]$, by an (infinite) matrix  denoted $M_\infty^A$.
The submatrix of $M_\infty^A$ consisting of its first $r\geq 1$ rows and $c\geq 1$ columns is denoted $M_{r,c}^A$.

Van der Hoeven's key observation is that an operator~$A$ of bidegree $(d,r)$ is completely determined by the matrix  $M^A:= M_{d+r+1,r+1}^A$.
Writing $A=\sum_{i=0}^d \sum_{j=0}^r a_{i,j} \x^i \Tx^j$ and using the relation $\Tx^j(\x^k) = k^j \x^k$ yields
\[A(\x^k) = \sum_{i,j} a_{i,j} k^j  \x^{i+k} = \x^k \sum_{i=0}^d \tilde A_i(k) \x^i,\]
where the polynomials $\tilde A_i$ are defined as $\tilde A_i(\x) = \sum_{j=0}^r a_{i,j}\x^j$ for all $0\leq i \leq d$.
Thus the matrix $M^A$ has the following rectangular banded form:

\begin{equation} \label{matrixdef}
M^A =
\left[
\begin{array}{cccc}
 \tilde A_0(0) & & &  \\
\tilde A_1(0) & \tilde A_0(1) & & \\
\vdots & \tilde A_1(1) & \ddots  & \\
\tilde A_d(0) &  \vdots  &  \ddots &\tilde A_0(r)  \\
 & \tilde A_d(1)  &  &\tilde A_1(r)  \\
 & & \ddots & \vdots \\
 & & & \tilde A_d(r)\\
 \end{array}
\right].
\end{equation}

The knowledge of~$A$ is equivalent to that of all $d+1$ polynomials~$\tilde A_i$.
Each of the latter having degrees bounded by~$r$, this is also equivalent to the data of the values $\tilde A_i(k)$, for $0 \leq k \leq r$ and $0\leq i \leq d$.
This is true by Lagrange interpolation.
Thus, $A$ is indeed completely determined by the $r+1$ polynomials $A(\x^k)$, 
and also by the matrix~$M^A$.

Now, let  $A,B \in \bK[\x]\langle \Tx  \rangle$ and let~$C$ be~$BA$.
Then $M_\infty^C = M_\infty^B M_\infty^A$.
If $A$, $B$, and~$C$ have bidegrees $(d_A,r_A),(d_B,r_B)$, and $(d_C,r_C)$,  then the previous discussion implies the following ``finite version'' of this matrix equality:
\begin{equation}\label{matrixproduct}
M^C = M^B_{d_C+r_C+1,d_A+r_C+1} M^A_{d_A+r_C+1,r_C+1},
\end{equation}
which is the basis of the algorithm in~\cite{vdHoeven02}, described below.
\begin{figure}[ht]
  \begin{center}
    \fbox{\begin{minipage}{8cm}
  \begin{center}\textsf{Mul}${}_\Tx$($B,A$) \end{center}
      \textbf{Input:} $A,B \in \bK[\x] \langle \Tx \rangle$.\\
     \textbf{Output:} their product $C=BA$.\\[-4.5mm]
        \begin{tabbing}
1. Compute the $\tilde A_i$'s and~$\tilde B_i$'s from $A$ and~$B$, then the\\
\qquad matrices $M^B_{d_C+r_C+1,d_A+r_C+1} $ and $M^A_{d_A+r_C+1,r_C+1}$.\\
2. Compute $M^C$ by Eq.~\eqref{matrixproduct}.\\
3. Compute the $\tilde C_i$'s from~$M^C$, then recover~$C$.
     \end{tabbing}
      \end{minipage}
    }\end{center}
\vskip-10pt
  \caption{Product of differential operators in $\Tx$.}
  \label{fig:JorisAlgo}
\end{figure}

Putting all these considerations together leads to Algorithm \textsf{Mul}${}_\Tx$ in Fig.~\ref{fig:JorisAlgo} and
proves the following proposition.

\begin{proposition}
Algorithm \textsf{Mul}${}_\Tx$ in Fig.~\ref{fig:JorisAlgo} reduces the computation of the product $C=BA$ to the following tasks:
\begin{enumerate}
\item[\emph{(T1)}\/] $d_A+1$ evaluations in degrees\/~$\leq r_A$ on\/ $0,1,\ldots,r_C$;
\item[\emph{(T2)}\/] $d_B+1$ evaluations in degrees\/~$\leq r_B$ on\/ $0,1,\ldots,d_A+r_C$;
\item[\emph{(T3)}\/] $d_C+1$ interpolations in degrees\/~$\leq r_C$ on\/ $0,1,\ldots,r_C$;
\item[\emph{(T4)}\/] an instance of\/ $\MMul{d_C+r_C+1}{d_A+r_C+1}{r_C+1}$.
\end{enumerate}
\end{proposition}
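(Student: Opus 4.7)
The proof is a step-by-step verification that the three lines of Algorithm \textsf{Mul}${}_\Tx$ in Figure~\ref{fig:JorisAlgo} package exactly the four tasks (T1)--(T4) and introduce no extra arithmetic. The plan is therefore to walk through Steps~1, 2 and~3 in turn, showing that each has the stated content; the key ingredients are the banded structure~\eqref{matrixdef} and the matrix identity~\eqref{matrixproduct}, both already justified in the preceding discussion, so no new algebraic fact is required.

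First I would observe that extracting the univariate polynomials $\tilde A_i(\x)=\sum_j a_{i,j}\x^j$ from the canonical form $A=\sum_{i,j} a_{i,j}\x^i\Tx^j$, and symmetrically for~$B$, is pure rearrangement of coefficients; likewise for the final reassembly of~$C$ from the $\tilde C_i$. Hence the whole arithmetic cost is concentrated in three parts: evaluating the $\tilde A_i$ and $\tilde B_i$ to populate the matrices $M^A_{d_A+r_C+1,r_C+1}$ and $M^B_{d_C+r_C+1,d_A+r_C+1}$, forming the matrix product~\eqref{matrixproduct}, and interpolating to recover the $\tilde C_i$ from~$M^C$.

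Next, reading off the banded form~\eqref{matrixdef} applied to~$A$: column~$k$ of $M^A_{d_A+r_C+1,r_C+1}$ is shifted down by $k$ rows and lists the values $\tilde A_0(k),\dots,\tilde A_{d_A}(k)$, with $0\leq k\leq r_C$. So filling this matrix amounts to evaluating each of the $d_A+1$ polynomials $\tilde A_i$, each of degree at most $r_A$, on $0,1,\dots,r_C$ --- task~(T1). The same bookkeeping for~$B$, whose banded matrix now has $d_A+r_C+1$ columns indexed by $0\leq k\leq d_A+r_C$, gives task~(T2). Step~2 is by definition an instance of $\MMul{d_C+r_C+1}{d_A+r_C+1}{r_C+1}$, that is, task~(T4); its correctness rests on the matrix identity~\eqref{matrixproduct}, already established.

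Finally, Step~3 inverts the construction of Step~1 for~$C$: the $(i+k,k)$-entry of $M^C$ equals $\tilde C_i(k)$, so the $(d_C+1)(r_C+1)$ values so obtained determine the $d_C+1$ polynomials $\tilde C_i$, each of degree at most $r_C$, by one Lagrange interpolation on $0,1,\dots,r_C$ per polynomial --- exactly task~(T3). Collecting the coefficients of the $\tilde C_i$ into the canonical form of~$C$ is again free. There is no real obstacle; the only point demanding care is the dimension count --- in particular, checking that the middle dimension $d_A+r_C+1$ in~\eqref{matrixproduct} suffices, i.e.\ that entries $(M_\infty^A)_{l,k}$ with $l>d_A+r_C$ cannot contribute to any nonzero entry of the truncated~$M^C$, which is immediate from the supports of the bands in~$M_\infty^A$ and~$M_\infty^B$.
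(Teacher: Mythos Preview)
Your proof is correct and follows exactly the same approach as the paper's own proof, which simply states that Eq.~\eqref{matrixdef} shows Step~1 is performed by (T1--T2), Step~3 by~(T3), and Step~2 by~(T4). You have merely expanded these one-line justifications into explicit detail, including the harmless extra remark on the middle dimension.
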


\begin{proof} Eq.~\eqref{matrixdef} shows that Step~1 in Algorithm \textsf{Mul}${}_\Tx$ is performed by the evaluation Tasks (T1--T2).
Similarly, the interpolation Task~(T3) performs Step~3.
Finally, the product in Step~2 is computed by~(T4).
\end{proof}

We stress that the evaluation-interpolation scheme used in Algorithm \textsf{Mul}${}_\Tx$ requires that the interpolation points $0,1,\ldots,r_C$ be mutually distinct.
Thus, this scheme would not have worked over a field of small characteristic, but would have remained valid in large enough characteristic.

In the original article~\cite{vdHoeven02}, Tasks (T1--T3) are performed by matrix multiplications, as explained in the next lemma.

\begin{lemma} \label{evaluationVand}
Let $d,r,s\in\bN$ and let $a_0,\ldots,a_s$ be distinct points in\/~$\bK$.
Evaluating $d+1$  polynomials of degree $r$ on the $a_i$'s reduces to an instance of\/ $\MMul{s+1}{r+1}{d+1}$ plus $\bigO(sr)$~ops.
Interpolating $d+1$~polynomials of degree~$s$ on the~$a_i$'s amounts to an instance of\/ $\MMul{s+1}{s+1}{d+1}$ plus $\bigO(s^2)$~ops.
\end{lemma}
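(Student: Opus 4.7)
The plan is to cast both tasks as Vandermonde-type linear operations that can be written as a single rectangular matrix product, once a small auxiliary matrix has been set up.

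For evaluation, I stack the $d+1$ input polynomials as the columns of an $(r+1)\times(d+1)$ coefficient matrix $P$, so that entry $(k,\ell)$ is the $\x^k$-coefficient of the $\ell$-th polynomial. On the other side, I form the $(s+1)\times(r+1)$ Vandermonde matrix $V$ whose $(i,k)$-entry is $a_i^k$. Then the values I seek are exactly the entries of the product $VP$, which is an instance of $\MMul{s+1}{r+1}{d+1}$. The only extra work is building $V$: starting from the column $1$ (all ones), I compute each next column by an entrywise multiplication by $(a_0,\ldots,a_s)$, using $s$ ops per column and $r$ columns, for a total of $\bigO(sr)$ ops.

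For interpolation, I arrange the input values as an $(s+1)\times(d+1)$ matrix $Y$, whose $\ell$-th column lists the values of the $\ell$-th target polynomial at $a_0,\ldots,a_s$. The sought coefficients are the entries of $V^{-1}Y$, where now $V$ is the square Vandermonde matrix of size $s+1$ attached to the $a_i$'s. Once $V^{-1}$ is available, the multiplication is an instance of $\MMul{s+1}{s+1}{d+1}$, so what remains is to produce $V^{-1}$ in $\bigO(s^2)$~ops.

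The main obstacle is precisely this last point: one must check that the $(s+1)\times(s+1)$ inverse Vandermonde matrix can be assembled in quadratic time. The rows of $V^{-1}$ are the coefficient vectors of the Lagrange basis polynomials $L_i(\x)=\prod_{j\ne i}(\x-a_j)/\prod_{j\ne i}(a_i-a_j)$. I would first build the master polynomial $M(\x)=\prod_j(\x-a_j)$ incrementally, inserting one factor at a time in $\bigO(s)$~ops, for a total of $\bigO(s^2)$. I then obtain each numerator $M(\x)/(\x-a_i)$ by synthetic division of $M$ by $\x-a_i$ in $\bigO(s)$~ops, giving $\bigO(s^2)$ for all $i$; the denominators $M'(a_i)=\prod_{j\ne i}(a_i-a_j)$ are obtained by Horner evaluation of $M'$ at each $a_i$ in $\bigO(s)$ ops each, again $\bigO(s^2)$ in total. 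Dividing each numerator row by the corresponding scalar is $\bigO(s)$ per row, so $\bigO(s^2)$ overall. This yields all of $V^{-1}$ within the announced budget, and concludes the proof.
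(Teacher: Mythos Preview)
Your argument is correct and matches exactly what the paper hints at (``grouping multiplications by Vandermonde matrices into a single product''), including the $\bigO(s^2)$ construction of the inverse Vandermonde via Lagrange polynomials. One small slip: with your convention $V_{i,k}=a_i^k$, the coefficient vectors of the $L_i$ are the \emph{columns} of $V^{-1}$, not the rows; this does not affect the complexity count.
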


\begin{proof}
The omitted proof is based on grouping multiplications by Vandermonde matrices into a single product.
\end{proof}

Using Lemma~\ref{evaluationVand}, one immediately deduces the cost of van der Hoeven's algorithm  ``\`a la lettre'' (\textsf{vdH${}_\Tx$});
the following enumeration displays only the dominating costs, quadratic estimates like $O(r_C r_A)$ being intentionally neglected:
\begin{enumerate}
\item $\MM(r_C+1,r_A+1, d_A+1)$ for (T1);
\item  $\MM(d_A+r_C+1,r_B+1,d_B+1)$ for (T2);
\item  $\MM(r_C+1,r_C+1,d_C+1)$ for (T3);
\item  $\MM(d_C+r_C+1,d_A+r_C+1,r_C+1)$ for (T4).
\end{enumerate}
Notice that the last step dominates the cost.

For Problem $\OMul nn \Tx$ which is studied in~\cite{vdHoeven02}, applying the estimate~\eqref{eq:naive-abc} leads to the number~$2+3+2\cdot2\cdot2+4\cdot3\cdot2=37$ of $n \times n$ block multiplications given in column \textsf{vdH${}_\Tx$} of Table~\ref{table:MM}.
This estimate is however pessimistic and can be reduced to~20:
Strassen's formula reduces the 8~block products in Task~(T3) to~7;
the band structure of the matrices in Task~(T4) reduces~24 to only 8~products of non-zero blocks.

\paragraph*{A first improvement}
Algorithm \textsf{vdH${}_\Tx$} can be improved
by making use of fast multipoint evaluation and interpolation of Lemma~\ref{cost-results}(b) to perform Steps 1 and~3 of Algorithm {\textsf{Mul${}_\Tx$}} in Fig.~\ref{fig:JorisAlgo}.
This remark will be crucial in our proof of equivalence in~\S\ref{ssec:MatToOper}.
We arrive at the following complexity estimates:
\begin{enumerate}
\item $\bigO\bigl(d_A \, \sM(r_C) \log r_C \bigr)$ for (T1);
\item $\bigO\bigl(d_B \, \sM(d_A+r_C) \log (d_A+r_C) \bigr)$ for (T2);
\item $\bigO\bigl(d_C \, \sM(r_C) \log r_C \bigr)$ for (T3).
\end{enumerate}

Assuming FFT is available for polynomial multiplication, the cumulated  cost of Tasks (T1--T3) drops to
\[\bigOsoft\bigl(d_A r_C + d_B(d_A+r_C) + d_C r_C\bigr) \in  \bigOsoft\bigl( d_Cr_C + d_A d_B\bigr).\]
This cost is nearly optimal, since it is almost linear in the number of non-zero elements of the matrices involved in Eq.~\eqref{matrixproduct}.
In the particular case of problem $\OMul nn \Tx$, we obtain the numbers 24 and~8 of column \textsf{IvdH${}_\Tx$} in Table~\ref{table:MM}.

\subsection{Matrix multiplication reduces to  product in $ \bK[\x]\langle \Tx  \rangle$} \label{ssec:MatToOper}
In summary, the results of the previous section show that $\sC_\Tx(n)\in \bigO\bigl(\MM(n)\bigr)$.
Here we prove the converse statement, by proceeding in two steps.
First, Lemma~\ref{fromDiffMulToT} shows that the multiplication, whose complexity is denoted $\sT(n)$, of two lower-tri\-an\-gu\-lar matrices of size $n\times n$ reduces to the product of two operators of bidegree at most~$(n,n)$ in $(\x,\Tx)$.
Secondly, Lemma~\ref{fromTtoOmega} proves that multiplying two arbitrary matrices amounts to a constant number of products of lower-triangular matrices.

\begin{lemma}\label{fromDiffMulToT}
  $\sT(n) \in \sC_\Tx(n) +  \bigO\bigl(n\,\sM(n) \log n\bigr).$
\end{lemma}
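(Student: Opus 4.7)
The idea is to embed lower-triangular matrices into operators of $\bK[\x]\langle\Tx\rangle$ via the matrix representation $M_\infty^A$ already introduced in~\S\ref{ssec:OperToMat}, then extract the product from the product of operators.

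First, I would observe that for any $A\in\bK[\x]\langle\Tx\rangle$ the infinite matrix $M_\infty^A$ is lower triangular: the identity $A(\x^k)=\sum_{i=0}^d\tilde A_i(k)\x^{i+k}$ from~\S\ref{ssec:OperToMat} shows that entry $(m,k)$ of $M_\infty^A$ equals $\tilde A_{m-k}(k)$ if $0\leq m-k\leq d$, and $0$ otherwise. In particular, if $A$ has $\x$-degree at most $n-1$, then the leading $n\times n$ principal submatrix $N^A$ of $M_\infty^A$ is an arbitrary lower-triangular matrix in the following sense: given $L=(\ell_{m,k})_{0\leq k\leq m\leq n-1}$, it suffices to choose, for each $i=0,\ldots,n-1$, a polynomial $\tilde A_i$ of degree at most $n-1-i$ that takes the value $\ell_{i+k,k}$ at $k=0,1,\ldots,n-1-i$. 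Each such $\tilde A_i$ is determined by $n-i$ interpolation values, which by Lemma~\ref{cost-results}(b) can be computed in $\bigO\bigl(\sM(n-i)\log(n-i)\bigr)$~ops., so the total encoding cost for $L$ is $\bigO\bigl(n\,\sM(n)\log n\bigr)$. This yields an operator $A$ of bidegree at most $(n-1,n-1)$.

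Second, given two lower-triangular matrices $L_1,L_2$, encode them as operators $A_1,A_2$ of bidegree at most $(n-1,n-1)$ as above, and compute $A_3:=A_2A_1$ using the product in $\bK[\x]\langle\Tx\rangle$, at cost $\sC_\Tx(n)$. Because $M_\infty^{A_1}$ and $M_\infty^{A_2}$ are both lower triangular, the leading $n\times n$ principal submatrix of the product $M_\infty^{A_2}M_\infty^{A_1}=M_\infty^{A_3}$ coincides with the product $N^{A_2}N^{A_1}=L_2L_1$ of the leading $n\times n$ principal submatrices: any entry $(m,k)$ with $0\leq k\leq m\leq n-1$ is a sum $\sum_j(M_\infty^{A_2})_{m,j}(M_\infty^{A_1})_{j,k}$ in which the triangular structure forces $k\leq j\leq m\leq n-1$.

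Third, I would recover $L_2L_1$ by reading off the coefficients $\tilde{A_3}_0,\ldots,\tilde{A_3}_{n-1}$ (free from $A_3$) and evaluating them on $0,1,\ldots,n-1$. The operator $A_3$ has bidegree at most $(2n-2,2n-2)$, so each $\tilde{A_3}_i$ has degree at most $2n-2$; $n$ multipoint evaluations on $n$ points each cost $\bigO\bigl(\sM(n)\log n\bigr)$ per polynomial by Lemma~\ref{cost-results}(b), i.e.\ $\bigO\bigl(n\,\sM(n)\log n\bigr)$ in total. Summing the encoding, operator-product, and decoding phases yields the announced bound $\sT(n)\in\sC_\Tx(n)+\bigO\bigl(n\,\sM(n)\log n\bigr)$.

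The only subtlety that requires care is ensuring that the interpolation/evaluation points $0,1,\ldots$ are mutually distinct; this is automatic in characteristic zero, consistent with the global hypothesis of~\S\ref{sec:equiv}. Everything else amounts to bookkeeping: matching the sizes of the submatrices with the bidegree $(n-1,n-1)$ of the encoded operators, checking that $\sC_\Tx$ applied to operators of bidegree at most $(n-1,n-1)$ is absorbed by $\sC_\Tx(n)$ under the standing regularity assumptions on $\sC_\Tx$, and bounding the sum $\sum_{i<n}\sM(n-i)\log(n-i)$ by $\bigO\bigl(n\,\sM(n)\log n\bigr)$ using the super-linearity of $\sM$.
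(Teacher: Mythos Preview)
Your proposal is correct and follows essentially the same route as the paper: encode the two lower-triangular matrices as operators in $\bK[\x]\langle\Tx\rangle$ by interpolating along diagonals to produce the $\tilde A_i$'s, multiply the operators at cost $\sC_\Tx(n)$, then decode the top-left principal block of $M_\infty^{A_3}$ by multipoint evaluation. The only cosmetic differences are that the paper works with $(n+1)\times(n+1)$ matrices and operators of bidegree at most $(n,n)$, and invokes Eq.~\eqref{matrixproduct} rather than spelling out the lower-triangularity argument you give for why the principal block of the infinite-matrix product equals the product of the principal blocks.
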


\begin{proof}
Let $L_1,L_2$ be two $(n+1) \times (n+1)$ lower-triangular matrices.
Denote $(t_{i,j})$ and $(s_{i,j})$ their coefficients, with $0\leq i,j \leq n$.
Let $\tilde B_\ell(\x)$ and $\tilde A_\ell(\x)$ be the (unique) polynomials in $\bK[\x]$ of degree at most $n-\ell$ that interpolate the elements of the $\ell$th lower diagonal of $L_1$, resp.\ $L_2$, on the set $\{0,1,\ldots,n-\ell\}$.
Specifically, for $0\leq \ell,j \leq n$ with $\ell + j \leq n$, we have $t_{\ell+j,j} = \tilde B_\ell(j) $ and $ s_{\ell+j,j} = \tilde A_\ell(j)$.
Using fast interpolation, the computation of the polynomials $\tilde B_\ell(\x)$ and $\tilde A_\ell(\x)$, for $0\leq \ell \leq n$, is done in $\bigO\bigl(n \,\sM(n) \log n\bigr)$~ops.
Define $A=\sum_{\ell=0}^n \sum_{j=0}^{n-\ell} a_{\ell,j} \x^\ell \Tx^j$ and $B=\sum_{\ell=0}^n \sum_{j=0}^{n-\ell} b_{\ell,j} \x^\ell \Tx^j$ from the coefficients in
$\tilde A_\ell(\x) = \sum_{j=0}^{n-\ell} a_{\ell,j}\x^j$ and $\tilde B_\ell(\x) = \sum_{j=0}^{n-\ell} b_{\ell,j}\x^j$.
Let $C=BA$.
Then, $L_1$ and~$L_2$ are seen to be top-left blocks of $M^B$ and~$M^A$, and Eq.~\eqref{matrixproduct} with $A$ replaced by~$C$ shows that the top-left $(n+1) \times (n+1)$ submatrix of $M^C$ is the lower-triangular matrix $L_1 L_2$.
This submatrix is computed starting from the coefficients of $C$ using  $\bigO\bigl(n\, \sM(n) \log n\bigr)$~ops., by fast multipoint evaluation.
\end{proof}

\begin{lemma}\label{fromTtoOmega}
 $\MM(n)  \in \bigO\bigl(\sT(n)\bigr).$
\end{lemma}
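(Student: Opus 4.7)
The plan is to realize the product of two arbitrary $n\times n$ matrices as a single product of two lower-triangular matrices of size $3n$, by a classical padding trick. Given $A,B\in\bK^{n\times n}$, I would form the $3n\times 3n$ block matrices
\begin{equation*}
L_1 = \begin{pmatrix} I_n & 0 & 0 \\ 0 & I_n & 0 \\ 0 & B & I_n \end{pmatrix},
\qquad
L_2 = \begin{pmatrix} I_n & 0 & 0 \\ A & I_n & 0 \\ 0 & 0 & I_n \end{pmatrix}.
\end{equation*}
Both are lower-triangular (each diagonal block is~$I_n$, and the off-diagonal blocks lie strictly below the diagonal). A direct block multiplication shows that the $(3,1)$-block of $L_1L_2$ is exactly~$BA$, while all other blocks are either~$0$, $I_n$, or one of the inputs~$A,B$. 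Thus reading off~$BA$ from the product $L_1L_2$ yields the desired reduction, at a cost of $\sT(3n)$ plus $\bigO(n^2)$ for the trivial copy steps of assembling $L_1,L_2$ and extracting the answer.

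It remains to check that $\sT(3n)\in\bigO(\sT(n))$. This is not entirely formal because $\sT$ is not one of the functions for which the constant-scaling assumption is explicitly stated, but it is an immediate consequence of Lemma~\ref{fromDiffMulToT}: the bound $\sT(m)\in \sC_\Tx(m)+\bigO\bigl(m\,\sM(m)\log m\bigr)$, applied with $m=3n$, combined with the constant-scaling property assumed for~$\sC_\Tx$ and~$\sM$, gives $\sT(3n)\in \bigO\bigl(\sC_\Tx(n)+n\,\sM(n)\log n\bigr)\subseteq \bigO(\sT(n))$.

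There is really no obstacle here beyond noticing the padding identity; the only point deserving a brief remark is this last step propagating the scaling property to~$\sT$. Combining this lemma with Lemma~\ref{fromDiffMulToT} and the already-proved bound $\sC_\Tx(n)\in\bigO(\MM(n))$ from~\S\ref{ssec:OperToMat}, one obtains the promised three-way equivalence $\MM(n)\asymp \sC_\Tx(n)\asymp \sT(n)$ (and, with the analogous arguments sketched for~$\Dx$, also with~$\sC_\Dx(n)$).
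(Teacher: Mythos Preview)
Your padding construction is correct and is essentially the same device the paper uses; the paper happens to phrase it as a squaring identity
\[
\begin{pmatrix} I_n & 0 & 0 \\ M & I_n & 0 \\ 0 & N & I_n \end{pmatrix}^2
=
\begin{pmatrix} I_n & 0 & 0 \\ 2M & I_n & 0 \\ NM & 2N & I_n \end{pmatrix},
\]
reading off~$NM$ from the $(3,1)$ block, but your two-factor variant works equally well.

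The gap is in your handling of the scaling. You embed $n\times n$ matrices into $3n\times 3n$ lower-triangular ones, obtaining $\MM(n)\le \sT(3n)+\bigO(n^2)$, and then try to reduce $\sT(3n)$ to $\bigO\bigl(\sT(n)\bigr)$. Your derivation of that step ends with the inclusion $\bigO\bigl(\sC_\Tx(n)+n\,\sM(n)\log n\bigr)\subseteq \bigO\bigl(\sT(n)\bigr)$, which is unjustified and in fact circular. Lemma~\ref{fromDiffMulToT} only gives the \emph{upper} bound $\sT(n)\le \sC_\Tx(n)+\bigO\bigl(n\,\sM(n)\log n\bigr)$; it does not give the lower bound $\sC_\Tx(n)\in\bigO\bigl(\sT(n)\bigr)$ that you would need. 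Via the result $\sC_\Tx(n)\in\bigO\bigl(\MM(n)\bigr)$ of~\S\ref{ssec:OperToMat}, that lower bound is equivalent to $\MM(n)\in\bigO\bigl(\sT(n)\bigr)$, which is exactly the lemma you are trying to prove.

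The fix is immediate and is what the paper does: run the embedding in the other direction. Place arbitrary matrices of size~$\lceil n/3\rceil$ inside an $n\times n$ lower-triangular product; this yields $\MM(\lceil n/3\rceil)\le \sT(n)$. Now invoke the constant-scaling hypothesis on~$\MM$, for which it \emph{is} explicitly assumed, to get $\MM(n)\in\bigO\bigl(\MM(\lceil n/3\rceil)\bigr)\subseteq\bigO\bigl(\sT(n)\bigr)$.
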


\begin{proof}
Let $M, N$ be~$n\times n$ matrices. The identity
\[
\begin{bmatrix}
I_n& 0 & 0 \\
M & I_n & 0 \\
0 & N & I_n
\end{bmatrix}^2
=
\begin{bmatrix}
I_n& 0 & 0 \\
2M & I_n & 0 \\
NM & 2N & I_n
\end{bmatrix}
\]
shows that $\MM(\lceil n/3 \rceil) \leq \sT(n) \in \bigO(\sT(n))$ and the conclusion follows from the growth hypotheses on  $\MM$.
\end{proof}

Lemmas \ref{fromDiffMulToT} and~\ref{fromTtoOmega} imply the main result of this section.
\begin{theorem}
There exists a constant $K>0$ such that
\[\MM(n) \leq K\bigl( \sC_\Tx(n) +   n\,\sM(n) \log n\bigr).\]
\end{theorem}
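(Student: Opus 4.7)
The plan is to combine Lemmas~\ref{fromTtoOmega} and~\ref{fromDiffMulToT}, which between them already contain all of the substantive content: Lemma~\ref{fromTtoOmega} reduces an arbitrary $n\times n$ product to the square of a block lower-triangular matrix, while Lemma~\ref{fromDiffMulToT} reduces a lower-triangular matrix product to a single operator product in $\bK[\x]\langle\Tx\rangle$ flanked by fast interpolation and evaluation. The final theorem is essentially bookkeeping on top of these two reductions.

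Concretely, I would first invoke Lemma~\ref{fromTtoOmega} to obtain a constant $K_1$ with $\MM(n) \leq K_1 \, \sT(n)$; unfolding the $\bigO$ here uses the standing hypothesis that $\sF(cn)\in\bigO(\sF(n))$ for any positive constant~$c$ applied to $\sF=\MM$, since the block identity in the proof of Lemma~\ref{fromTtoOmega} only gives $\MM(\lceil n/3\rceil) \leq \sT(n)$. Then I would invoke Lemma~\ref{fromDiffMulToT} to obtain a constant $K_2$ with $\sT(n) \leq \sC_\Tx(n) + K_2 \, n\,\sM(n)\log n$. Substituting the second inequality into the first yields
\[
\MM(n) \leq K_1\,\sC_\Tx(n) + K_1 K_2 \, n\,\sM(n)\log n,
\]
and taking $K := K_1 \max(1,K_2)$ gives the stated bound.

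There is no genuine obstacle: the theorem is a routine corollary once the two lemmas are in hand. The only care required is to track the conversions from asymptotic $\bigO$-statements to explicit constants and to remember to apply the polynomial-growth assumption on $\MM$ when passing from $\MM(\lceil n/3\rceil)$ to $\MM(n)$. Conceptually, the step that does the real work is the symmetric identity at the heart of Lemma~\ref{fromTtoOmega}, which encodes one arbitrary matrix product inside the square of a triangular matrix without inflating sizes beyond a constant factor; everything else is a matter of chaining bounds.
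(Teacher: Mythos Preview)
Your proposal is correct and matches the paper's own argument: the paper simply states that Lemmas~\ref{fromDiffMulToT} and~\ref{fromTtoOmega} together imply the theorem, and your write-up spells out exactly that chaining of bounds, including the use of the growth hypothesis on~$\MM$ to pass from $\MM(\lceil n/3\rceil)$ to~$\MM(n)$.
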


\subsection{Equivalence between product in $ \bK[\x]\langle \Dx  \rangle$ and in $ \bK[\x]\langle \Tx  \rangle$}
\label{EquivDandTheta}

Relax $\bK$ to be a field of arbitrary characteristic. 
Any operator $A=\sum_{i=0}^r \alpha_i \Tx^i$ in $ \bK[\x]\langle \Tx \rangle$ with coefficients $\alpha_i$ of degree at most $d$ can be expressed in the algebra $ \bK[\x]\langle \Dx \rangle$ as $A=\sum_{i=0}^r a_i \Dx^i$, with coefficients $a_i$ of degree at most $d+r$.

As indicated in the proof of \cite[Cor.~2]{BoSc05}, performing the conversion from the representation in~$\Tx$ to the representation in~$\Dx$ amounts to multiplying a Stirling matrix~$S$ of size~$r+1$ by an $(r+1) \times (d+1)$ matrix containing the coefficients of the $\alpha_i$'s.
This matrix product can be decomposed into $d+1$ matrix-vector products of the form $w=Sv$.
The coefficients of the vector~$w$ represent the coefficients of the polynomial $\sum_i v_i \x^i$ in the falling factorial basis $(\x)_k$.
As Lemma~\ref{cost-results}(c) holds for any characteristic, $w$~can be computed using $\bigO\bigl(\sM (r) \log r\bigr)$~ops.
To summarize, the coefficients~$a_i$ can be computed from the~$\alpha_i$'s in $\bigO\bigl(d \, \sM (r) \log r\bigr)$~ops.

Conversely, let $B= \sum_{i=0}^r b_i \Dx^i$ be  in $ \bK[\x]\langle \Dx \rangle$.
It can be written in the algebra $ \bK[\x,\x^{-1}]\langle \Tx  \rangle$ of differential operators in $\Tx$ with Laurent polynomial coefficients as follows: $B=\sum_{i=0}^r \beta_i \Tx^i$.
If the $b_i$'s have degrees bounded by $d$, then the $\beta_i$'s have degrees at most~$d$ and valuation at least~$-r$ in~$\x$.
A discussion similar to above shows that the computation of the coefficients $\beta_i$ from the coefficients $b_i$ amounts to multiplying the inverse of the Stirling matrix by an $(r+1) \times (d+r+1)$ matrix.
This matrix product can be decomposed into $d+r+1$ matrix-vector products by $S^{-1}$;
this amounts to expanding in the monomial basis  $d+r+1$ polynomials of  degree at most~$r$ given in the falling factorial basis.
Thus, the conversion can be done in $\bigO\bigl((d+r) \, \sM (r) \log r\bigr)$~ops.

We encapsulate this discussion into the following result, which proves that $\OMul nn\Dx$ and~$\OMul nn\Tx$ are computationally equivalent, up to $\bigOsoft(n^2)$ terms, in any characteristic.

\begin{theorem}
 There exist a constant $C>0$ such that
\begin{align*}
\sC_\Tx(n) \leq C\,\bigl(\sC_\Dx(n) + n \, \sM(n) \log n\bigr), \\
\sC_\Dx(n) \leq C\,\bigl(\sC_\Tx(n) + n \, \sM(n) \log n\bigr) ,
\end{align*}
% To avoid a widow on the next page.
\vskip-1pt
over fields of any characteristic.
\end{theorem}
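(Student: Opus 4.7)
The plan is to prove each inequality by the standard pattern of converting inputs to the other representation, multiplying there, and converting the result back, relying on the conversion costs recalled in the paragraphs preceding the statement.

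For $\sC_\Tx(n) \leq C\bigl(\sC_\Dx(n) + n\,\sM(n)\log n\bigr)$, I would start from operators $A,B\in\bK[\x]\langle\Tx\rangle$ of bidegree $(n,n)$ and convert them to $\bK[\x]\langle\Dx\rangle$ in $\bigO(n\,\sM(n)\log n)$ ops, producing operators of bidegree at most $(2n,n)$ in $\Dx$. A single instance of $\OMul{2n}{n}{\Dx}$ then returns $C=BA$ in $\bigO(\sC_\Dx(n))$ ops by the growth hypothesis on $\sC_\Dx$. Since $A,B$ lie in the subalgebra $\bK[\x]\langle\Tx\rangle$, so does $C$, and the inverse Stirling conversion---carrying the $\Dx$-form of $C$ into a Laurent-$\Tx$ form at the same asymptotic cost---automatically yields zero negative-power coefficients, returning $C$ in its standard $\Tx$-representation.

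The reverse bound is more delicate because the $\Dx\!\to\!\Tx$ direction of the conversion naturally introduces Laurent polynomial coefficients. My plan is to clear these denominators by an $\x$-rescaling and to perform the multiplication inside $\bK[\x]\langle\Tx\rangle$. Given $A,B\in\bK[\x]\langle\Dx\rangle$ of bidegree $(n,n)$, the identity $\Dx^i=\x^{-i}(\Tx)_i$ shows that $\tilde A:=\x^n A$ lies in $\bK[\x]\langle\Tx\rangle$ of bidegree at most $(2n,n)$. The naive choice $\tilde B:=\x^n B$ will not do: expanding $(\x^n B)(\x^n A)=\x^n\,(B\,\x^n)\,A$, the inner $\x^n$ fails to commute with $B$. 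The workaround is the identity $P\cdot \x^k=\x^k\,P(\x,\Tx+k)$ in $\bK[\x,\x^{-1}]\langle\Tx\rangle$, a linear consequence of $\Tx\x^k=\x^k(\Tx+k)$. Setting $\tilde B:=\x^{2n} B\,\x^{-n}=\x^n\,B(\x,\Tx-n)$, the shift $\Tx\to\Tx-n$ preserves the $\x$-valuations of the Laurent-$\Tx$ form of $B$, so $\tilde B$ also ends up in $\bK[\x]\langle\Tx\rangle$ of bidegree at most $(2n,n)$, and a direct calculation gives $\tilde B\,\tilde A=\x^{2n} BA$. Computing $\tilde A,\tilde B$ from $A,B$ combines the $\Dx\!\to\!\Tx$ conversion, a Taylor shift in $\Tx$ via Lemma~\ref{cost-results}(a), and an index shift, all in $\bigO(n\,\sM(n)\log n)$ ops. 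A single instance of $\OMul{2n}{n}{\Tx}$ returns $\tilde C=\tilde B\tilde A$ in $\bigO(\sC_\Tx(n))$ ops; converting $\tilde C$ back to $\Dx$-form and dividing each $\Dx^j$-coefficient by $\x^{2n}$---the divisions being exact because $\tilde C=\x^{2n} BA$---yields $BA$ at the same asymptotic cost.

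The main obstacle is this second direction, specifically the design of the shift-and-rescale trick circumventing the Laurent denominators; everything else reduces to bookkeeping of degrees and valuations. Since both the Stirling change of basis (Lemma~\ref{cost-results}(c)) and the Taylor shift (Lemma~\ref{cost-results}(a)) are valid in any characteristic, the theorem holds over fields of arbitrary characteristic as claimed.
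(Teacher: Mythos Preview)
Your proposal is correct and follows essentially the same approach as the paper. For the first inequality you convert to $\Dx$, multiply, and convert back exactly as the paper does; for the second inequality your rescaling $\tilde A=\x^nA$ and $\tilde B=\x^{2n}B\x^{-n}=\x^nB(\x,\Tx-n)$ is precisely the paper's factorization $B_i=\x^{-n}C_i$ together with the commutation $C_2(\x,\Tx)\x^{-n}=\x^{-n}C_2(\x,\Tx-n)$, leading to the identical product $B_2B_1=\x^{-2n}C_2(\x,\Tx-n)C_1(\x,\Tx)$ and the same Taylor-shift cost via Lemma~\ref{cost-results}(a).
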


\begin{proof}
Let $A_1,A_2$ be of bidegree~$(n,n)$ in $\bK[\x]\langle \Tx \rangle$.
By the previous discussion, converting them into $\bK[\x]\langle \Dx \rangle$ has cost $\bigO\bigl(n \, \sM(n) \log n\bigr)$.
Both $A_1,A_2$ have bidegrees at most $(2n,n)$ in $(\x,\Dx)$ and can thus be multiplied using $\sC_\Dx(2n) \in \bigO\bigl(\sC_\Dx(n)\bigr)$~ops.\ 
Converting the result back into $\bK[\x]\langle \Tx \rangle$ costs $\bigO\bigl(n \, \sM(n) \log n\bigr)$ ops.
This proves the first inequality.

Let now $B_1$ and~$B_2$ be of bidegree $(n,n)$ in $\bK[\x]\langle \Dx \rangle$.
Their conversion in $\bK[\x,\x^{-1}]\langle \Tx \rangle$ can be performed using $\bigO\bigl(n \, \sM(n) \log n\bigr)$~ops.\ and produces  two operators $C_1$ and $C_2$ in $\bK[\x]\langle \Tx \rangle$, of bidegrees at most $(2n,n)$ in $(\x,\Tx)$ such that $B_1 =\x^{-n} C_1(\x,\Tx)$ and $B_2 =\x^{-n} C_2(\x,\Tx)$. 
Using the commutation rule $C_2(\x,\Tx)  \x^{-n} = \x^{-n} C_2(\x,\Tx-n)$, we deduce the equality $B_2 B_1 =  \x^{-2n} C_2 (\x,\Tx-n) C_1(\x,\Tx)$.
Writing $C_2(\x,\Tx) = \sum_{j=0}^n \x^j c'_j(\Tx)$ shows that computing the coefficients of $C_2 (\x,\Tx-n)$ amounts to $n+1$ polynomial shifts in $\bK[\Tx]$ in degree at most $n$.
Each of these shifts can be computed in $\bigO\bigl(\sM(n) \log n\bigr)$~ops., using Lemma~\ref{cost-results}(a).
Conversion of $B_2B_1$ back into $\bK[\x]\langle \Dx \rangle$ has the same cost.
\end{proof}

\section{Better constants in $\bK[\x]\langle\Dx\rangle$}
\label{sec:better-constants}
In~\S\ref{ssec:vdH_D}, we revisit van der Hoeven's algorithm for $\OMul nn\Dx$ and exhibit the constant factor in its $\bigO\bigl(\MM(n)\bigr)$ cost. Then, we propose in~\S\ref{sec:MulWeyl} a new algorithm with a better constant.

\subsection{Multiplication in $\bK[\x]\langle\Dx\rangle$: van der Hoeven's algorithm revisited}
\label{ssec:vdH_D}

Van der Hoeven's algorithm for computing products in $\bK[\x]\langle \Dx \rangle$ is based on the fact that his algorithm for products in $\bK[\x]\langle \Tx  \rangle$ can be adapted to operators with Laurent polynomials coefficients.
Indeed, to any  $L \in \bK[\x,\x^{-1}]\langle \Tx  \rangle$ of the form $L=\sum_{i=-v}^d \sum_{j=0}^r \ell_{i,j}\x^i \Tx^j$ is associated an infinite matrix representing the $\bK$-linear map of multiplication by~$L$ from~$\bK[\x]$ to~$\x^{-v}\bK[\x]$.
Its $(v+d+r+1) \times (r+1)$-submatrix~$M^L_{0,r}$ (defined shortly) is banded and it uniquely determines the operator~$L$, as in the case of polynomial coefficients.

To be precise, for  two integers $\alpha\leq\beta$ we denote by $M^L_{\alpha,\beta}$ the $(v+d+\beta-\alpha +1) \times (\beta - \alpha + 1)$ matrix whose $(\gamma-\alpha+1)$-th column,  for $\alpha\leq\gamma \leq \beta$,  contains the coefficients of $L(\x^\gamma)$ on $\x^{-v+\alpha},\ldots,\x^{d+\beta}$.
The matrix $M^L_{\alpha,\beta}$ has a banded form and contains on its diagonals the evaluations on the points $\alpha,\ldots, \beta$ of the polynomials $\tilde{L}_{-v},\ldots,\tilde{L}_d$ defined by $\tilde{L}_i(\x) = \sum_{j=0}^r \ell_{i,j}\x^j$ for all $-v \leq i \leq d$.

Let $A,B$ have valuations $-v_A,-v_B$ and degrees $d_A,d_B$ with respect to~$\x$, and degrees $r_A,r_B$ in~$\Tx$.
If $C=BA$  in $\bK[\x,\x^{-1}]\langle \Tx  \rangle$, then the  following equality, analogous to Eq.~\eqref{matrixproduct}, holds:
\begin{equation} \label{matrixproduct-Laurent}
M^C_{0,r_C} = M^B_{-v_A,d_A+r_C} M^A_{0,r_C}.
\end{equation}

Likewise, the product of operators in $\bK[\x,\x^{-1}]\langle \Tx  \rangle$ reduces to some evaluation and interpolation tasks (in order to convert between operators and matrices) and to the main matrix-multiplication task~\eqref{matrixproduct-Laurent}, which is an instance of $\MMul{v_C+d_C+r_C+1}{v_A+d_A+r_C+1}{r_C+1}$.

The algorithm for multiplication in $\bK[\x]\langle \Dx  \rangle$ based on multiplication in $\bK[\x,\x^{-1}]\langle \Tx  \rangle$ is described in Fig.~\ref{fig:JorisAlgo-Laurent} below.

\begin{figure}[ht]
  \begin{center}
    \fbox{\begin{minipage}{8cm}
  \begin{center}\textsf{Mul}${}_\Dx$($B,A$) \end{center}
      \textbf{Input:} $A,B \in \bK[\x] \langle \Dx \rangle$.\\
     \textbf{Output:} their product $C=BA$.\\[-4.5mm]
        \begin{tabbing}
1. Convert $A,B$ in $\bK[\x,\x^{-1}]\langle \Tx  \rangle$. \\
2. Compute the product $C=BA$ in $\bK[\x,\x^{-1}]\langle \Tx  \rangle$:\\
\qquad 2.1 From $A$ and $B$, compute the matrices   \\  \qquad \qquad $ M^B_{-v_A,d_A+r_C}$ and $M^A_{0,r_C}.$ \\
\qquad 2.2 Compute the matrix $M^C_{0,r_C}$ using Eq.~\eqref{matrixproduct-Laurent}. \\
\qquad 2.3 Recover $C$ {}from $M^C_{0,r_C}.$ \\
3. Convert $C$ in $\bK[\x]\langle \Dx  \rangle$ and return it.
     \end{tabbing}
      \end{minipage}
    }\end{center}
\vskip-10pt
  \caption{Product of differential operators in $\Dx$.}
  \label{fig:JorisAlgo-Laurent}
\end{figure}

In what follows, we treat in more detail  the main case of interest, $\OMul nn\Dx$, as solved by Algorithm \textsf{Mul}$_\Dx$ in Fig.~\ref{fig:JorisAlgo-Laurent}. 
Van der Hoeven suggests to perform Steps 1 and~3 using matrix multiplications by Stirling matrices and their inverses~\cite[\S5.1, Eqs.~(12--13)]{vdHoeven02} and Steps 2.1 and~2.3 using matrix multiplications by Vandermonde matrices and their inverses~\cite[\S2 and~\S4]{vdHoeven02}.
The elements of all the needed Stirling and Vandermonde matrices (and their inverses) can be computed using $\bigO(n^2)$~ops.
A careful inspection of the matrix sizes involved in Algorithm \textsf{Mul}$_\Dx$ shows that: 
 \begin{enumerate}
\item Step~1 reduces to 2 instances of $\MMul{2n+1}{n+1}{n+1}$;
\item Step~3 reduces to an instance of $\MMul{4n+1}{2n+1}{2n+1}$;
\item Step~2.1 reduces to an instance of $\MMul{2n+1}{n+1}{4n+1}$ and an instance of $\MMul{2n+1}{n+1}{2n+1}$;
\item Step~2.3 reduces to an instance of $\MMul{4n+1}{2n+1}{2n+1}$;
\item Step~2.2 reduces to an instance of $\MMul{6n+1}{4n+1}{2n+1}$.
 \end{enumerate}
This variant of the algorithm is what we call \textsf{vdH}.
Using again the estimate~\eqref{eq:naive-abc} yields the constant~96 in Table~\ref{table:MM}.

\paragraph*{Several Improvements} A first improvement on \textsf{vdH} 
is to use fast multipoint evaluation and interpolation for Steps 2.1 and~2.3.
A second improvement concerns conversions back and forth between operators in $\bK[\x,\x^{-1}]\langle \Dx  \rangle$ and in $\bK[\x,\x^{-1}]\langle \Tx  \rangle$ (Steps 1 and~3).
Instead of using matrix products by Stirling matrices and their inverses, one can apply Lemma~\ref{cost-results}(c),
as explained in~\S\ref{EquivDandTheta}.
Both improvements in conjunction with FFT lessen the cost of Steps 1, 2.1, 2.3, and 3 to a negligible $\bigOsoft(n^2)$.
We call this improved algorithm \textsf{IvdH}.
Using~\eqref{eq:naive-abc} yields the constant~48 in column \textsf{IvdH} in Table~\ref{table:MM}.
The constants 47 and~12 on the last row of the table are more technical and will be proved in~\cite{LongVersion}.
They rely on observing that the output of \textsf{IvdH} requires partial calculation of~\eqref{matrixproduct-Laurent}, reducing to an instance of $\MMul{4n+1}{3n+1}{2n+1}$.

\subsection{A new, direct evaluation-interpolation algorithm}
\label{sec:MulWeyl}

Let $A$ and~$B$ be in  $\bK[\x]\langle \Dx \rangle$ with respective bidegrees $(d_A,r_A)$ and~$(d_B,r_B)$.
We give here an evaluation-inter\-pol\-ation algorithm for computing $C = BA$ which essentially reduces to $\MMul{d_C+1}{d_A + r_C +1}{r_C+1}$ for those bidegrees.

To achieve this, we interpret again a differential operator~$P$ in $\bK[\x]\langle\Dx\rangle$ as a $\bK$-endomorphism of~$\bK[\x]$, and represent it in the canonical basis $(\x^i)_{i \geq 0}$ by an (infinite) matrix  denoted $\tM_\infty^P$.
The submatrix of $\tM_\infty^P$ consisting of its first $r+1\geq 1$ rows and $c+1\geq 1$ columns is denoted~$\tM_{r,c}^P$.

Then, much like Algorithm \textsf{Mul}${}_\Tx$ in~\S\ref{ssec:OperToMat}, our new algorithm \textsf{MulWeyl} in Fig.~\ref{fig:AlgoMulWeyl} relies on the key observation that an operator $P \in \bK[\x]\langle \Dx \rangle$ of bidegree $(d,r)$ is uniquely determined by the submatrix $\tM_{d,r}^P$ of~$\tM^P_\infty$.
This key fact is proved in Theorem~\ref{interpol} below.
The principle of the algorithm is given in Fig.~\ref{fig:EvalInterp}, where evaluation and interpolation are performed by truncated-series products.
\begin{figure}[h]
\begin{center}
\includegraphics[scale=0.8]{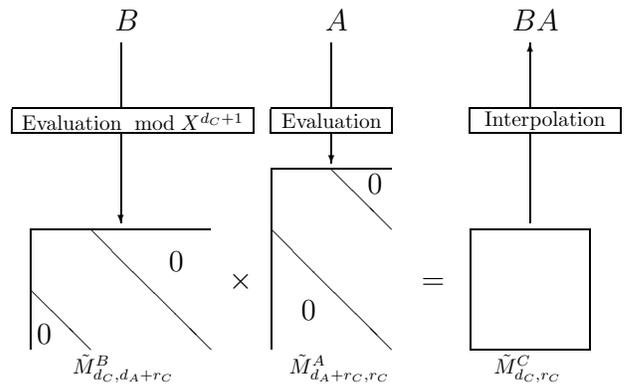}
\end{center}
\caption{\label{fig:EvalInterp} Evaluation-Interpolation w.r.t.~$\Dx$.}
\end{figure}
In the case of~$\OMul nn\Dx$, the corresponding matrices become $\tM^B_{2n,3n}$, $\tM^A_{3n,2n}$, and $\tM^C_{2n,2n}$.

\begin{figure}[ht]
  \begin{center}
    \fbox{\begin{minipage}{8cm}
  \begin{center}\textsf{MulWeyl}($B,A$) \end{center}
      \textbf{Input:} $A,B \in \bK[\x] \langle \Dx \rangle$.\\
     \textbf{Output:} their product $C=BA$.\\[-4.5mm]
\begin{tabbing}
1. Construct the matrices $\tM^A_{d_A+r_C,r_C}$ and $\tM^B_{d_C,d_A+r_C}$. \\
2. Compute the product $\tM^B_{d_C,d_A+r_C}\tM^A_{d_A+r_C,r_C}$. \\
3. Recover $C$ from the product in Step~2.
\end{tabbing}
\end{minipage}
    }\end{center}
\vskip-10pt
  \caption{Product of differential operators in $\Dx$.}
  \label{fig:AlgoMulWeyl}
\end{figure}

\begin{theorem}\label{theo:cost-MulWeyl}
Algorithm \textsf{MulWeyl} is correct and uses\/\\
$\MM(d_C+1, d_A + r_C +1, r_C+1) + \bigOsoft\bigl((d_C+r_C)^2\bigr)$~ops.
\end{theorem}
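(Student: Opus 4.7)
For correctness, I would start from the fact that $\tM_\infty^P$ represents multiplication by~$P$ as a $\bK$-endomorphism of~$\bK[\x]$, giving $\tM_\infty^{BA}=\tM_\infty^B\tM_\infty^A$ under composition. To justify the truncations in Step~2, I would note that for $0\leq k\leq r_C$, Leibniz's formula yields $A(\x^k)=\sum_i a_i(\x)\frac{k!}{(k-i)!}\x^{k-i}$ of $\x$-degree at most $d_A+k\leq d_A+r_C$, so column~$k$ of~$\tM_\infty^A$ vanishes beyond row~$d_A+r_C$. Therefore the product computed in Step~2 equals the top-left $(d_C+1)\times(r_C+1)$ block of~$\tM_\infty^C$, i.e.~$\tM^C_{d_C,r_C}$, and Step~3 recovers~$C$ uniquely from it by Theorem~\ref{interpol}.

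Step~2 is by construction an instance of $\MMul{d_C+1}{d_A+r_C+1}{r_C+1}$, contributing the main $\MM$ term. For Steps~1 and~3 I would exploit the bivariate generating-series identity
\[
\sum_{k\geq0}\frac{P(\x^k)}{k!}\,t^k \;=\; P\bigl(e^{\x t}\bigr) \;=\; e^{\x t}\,\tilde P(\x,t),\qquad \tilde P(\x,t):=\sum_i p_i(\x)\,t^i,
\]
which exhibits the (rescaled) columns of~$\tM_\infty^P$ as the coefficients of $e^{\x t}$ times the ``symbol'' of~$P$. Step~1 then reduces to two bivariate polynomial products modulo~$t^{r_C+1}$, each of bidegree at most $(d_C+r_C,r_C)$; by Lemma~\ref{cost-results}(d), each costs $\bigOsoft\bigl((d_C+r_C)^2\bigr)$~ops. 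Step~3 inverts this by multiplying~$\tM^C_{d_C,r_C}$, viewed as a truncated bivariate series, by~$e^{-\x t}$ modulo $\x^{d_C+1}$ and~$t^{r_C+1}$, within the same complexity.

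The delicate point I foresee is justifying Step~3: we only have the $\x$-truncation~$\tM^C_{d_C,r_C}$, yet multiplication by~$e^{-\x t}$ naturally produces output of $\x$-degree up to $d_C+r_C$. This causes no trouble because $\tilde C$ already has $\x$-degree at most~$d_C$, so the identity $\tilde C\equiv e^{-\x t}\bigl(e^{\x t}\,\tilde C\bigr)\pmod{\x^{d_C+1},t^{r_C+1}}$ is preserved under truncation---which is precisely what turns Theorem~\ref{interpol} into a constructive, quasi-linear algorithm. The remaining technical bookkeeping of the bivariate arithmetic is routine and matches the claimed cost.
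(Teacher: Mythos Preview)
Your correctness argument coincides with the paper's: both identify the product in Step~2 as $\tM^C_{d_C,r_C}$ by tracking how the endomorphism matrices truncate, and both defer the reconstruction in Step~3 to Theorem~\ref{interpol}.

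For the complexity of Steps~1 and~3, your generating-series identity $\sum_{k\geq0} P(\x^k)\,t^k/k! = e^{\x t}\,\tilde P(\x,t)$ is precisely the mechanism the paper exploits, only packaged differently. The paper observes that $e^{\x t}$ is supported on a single diagonal and therefore unfolds the bivariate product into $d+r+1$ univariate products~$S_\ell$, one per diagonal of~$\tM^P$ (Eq.~\eqref{eq:convolution}); summing their costs gives the sharp $\sM(mn)+\bigO(mn)$ of Propositions~\ref{prop:Eval} and~\ref{prop:Interpol}. You instead invoke Lemma~\ref{cost-results}(d) on the full bivariate product, which is more concise but hides a constant factor. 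Both routes fit inside the claimed $\bigOsoft\bigl((d_C+r_C)^2\bigr)$, and your handling of the ``delicate point'' in Step~3 (that the $\x$-truncation does no harm because $\deg_\x\tilde C\leq d_C$) is the right justification.

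One small inaccuracy to fix: building $\tM^B_{d_C,d_A+r_C}$ requires columns up to index $d_A+r_C$, so the truncation for that factor is modulo $t^{d_A+r_C+1}$, not $t^{r_C+1}$, and the relevant bidegree is $(d_C,\,d_A+r_C)$ rather than $(d_C+r_C,\,r_C)$. Since $d_A+r_C\leq d_C+r_C$, your final estimate is unaffected.
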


\begin{proof}
By the definition of the matrix~$\tM_{r,c}^P$, the matrices constructed in Step~1 are associated to the linear map which sends $f \in \bK[\x]_{ \leq r_C}$ to~$A(f)$ in $\bK[\x]_{\leq d_A+r_C}$ and to the linear map which sends $f \in \bK[\x]_{\leq d_A+r_C}$ to $B(f) \bmod \x^{d_C + 1}$ in $\bK[\x]_{\leq d_C}$.
Therefore, the product at Step~2 delivers the $(BA)(\x^i) \bmod \x^{d_C+1}, \ 0 \leq i \leq r_C$.
The $(d_C+1) \times (r_C+1)$ matrix computed is thus equal to $\tM^C_{d_C,r_C}$.
This is summarized in the identity $\tM^B_{d_C,d_A+r_C} \tM^A_{d_A+r_C,r_C} = \tM^C_{d_C,r_C}$, in which the structure of zeros is given in Fig.~\ref{fig:EvalInterp}.
The interpolation of Step~3 relies on Theorem~\ref{interpol} below, which shows that $C = BA$ is fully and uniquely determined by $\tM^C_{d_C,r_C}$.
This terminates the correctness proof.
The claimed complexity derives immediately from Propositions \ref{prop:Eval} and~\ref{prop:Interpol} that are proved in the next subsections.
\end{proof}

\subsubsection{Interpolation theorem}

We now state the main interpolation result, which we prove after recalling a useful filtration on~$W=\bK[\x]\langle\Dx\rangle$.

\begin{theorem}\label{interpol}
For $d,r\in\bN$, let $W_{d,r}$ denote its $\bK$-subspace
\[W_{d,r} = \lbrace\, P \in W \; : \;  \deg_{\x}(P) \leq d, \ \deg_\Dx(P) \leq r \,\rbrace.\]
Then, an isomorphism is given by the $\bK$-linear map
\[\begin{array}{lccc}
\operatorname{EvOp}_{d,r} \, : \, & W_{d,r} & \rightarrow & \bK^{(d+1)\times(r+1)}\\
                                                                           & P             & \mapsto  & \tM^P_{d,r}
\end{array}.\]
\end{theorem}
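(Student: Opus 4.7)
The plan is to prove the theorem by combining a dimension count with an injectivity argument. Linearity of $\operatorname{EvOp}_{d,r}$ is immediate: for each fixed $j$, the map $P \mapsto P(\x^j)$ is $\bK$-linear on~$W$, and extracting the coefficient of~$\x^i$ is a $\bK$-linear functional on $\bK[\x]$, so each entry of $\tM^P_{d,r}$ depends linearly on~$P$. The space $W_{d,r}$ has a basis $(\x^i\Dx^k)_{0\leq i\leq d,\,0\leq k\leq r}$, so $\dim_\bK W_{d,r}=(d+1)(r+1)$, which equals $\dim_\bK\bK^{(d+1)\times(r+1)}$. Therefore it suffices to prove that $\operatorname{EvOp}_{d,r}$ has trivial kernel.

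For the injectivity, I would exploit the order filtration by the $\Dx$-degree. Let $P=\sum_{i=0}^d\sum_{k=0}^r p_{i,k}\x^i\Dx^k$ be a nonzero element of~$W_{d,r}$, and set
\[
k_0 = \min\bigl\{k\in\{0,\dots,r\}\;:\;\textstyle\sum_{i=0}^d p_{i,k}\x^i\neq 0\bigr\}.
\]
Applying $P$ to~$\x^{k_0}$ and using that $\Dx^k(\x^{k_0})=0$ for $k>k_0$ (the monomial $\x^{k_0}$ is annihilated after $k_0+1$ derivations) while $\Dx^{k_0}(\x^{k_0})=k_0!$, the contributions with $k<k_0$ all vanish by minimality of~$k_0$, and those with $k>k_0$ vanish for degree reasons. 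Hence
\[
P(\x^{k_0}) \;=\; k_0!\,\sum_{i=0}^d p_{i,k_0}\,\x^i,
\]
which is a nonzero polynomial of degree at most~$d$, because $\bK$ has characteristic zero and the sum $\sum_i p_{i,k_0}\x^i$ is nonzero by definition of~$k_0$. Its coefficients form the $(k_0+1)$st column of $\tM^P_{d,r}$, which is therefore nonzero. This shows $\ker\operatorname{EvOp}_{d,r}=\{0\}$, and by the dimension count $\operatorname{EvOp}_{d,r}$ is a $\bK$-linear isomorphism.

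There is no real obstacle; the crux is recognizing that the order filtration by $\Dx$-degree makes the action on the distinguished vectors $\x^{k_0},\x^{k_0+1},\dots$ triangular, so the leading nonzero "slice" of $P$ is detected by~$P(\x^{k_0})$. The only subtle point is the appeal to characteristic zero, used exactly to ensure $k_0!\neq 0$; in positive characteristic~$p$, the same argument still goes through provided $p>r$, in line with the paper's running assumption that small characteristic must be handled separately.
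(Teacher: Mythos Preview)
Your proof is correct and is more elementary than the paper's. Both arguments start with the same dimension count and reduce to injectivity, but they diverge in how they detect a nonzero~$P$.

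The paper works with the weight filtration (giving $\x$ weight~$1$ and $\Dx$ weight~$-1$) and invokes Lemma~\ref{lem:b-hom-parts} to decompose $P$ along diagonals as $\sum_i \x^i\ell_i(\x\Dx)+\sum_i \ell_{-i}(\x\Dx)\Dx^i$. Evaluating at $\x^k$ then yields, for each diagonal index~$i$, the values $\ell_i(0),\dots,\ell_i(\mu_i)$ as entries of the matrix, and Lagrange interpolation forces each $\ell_i$ to vanish. You instead use the order filtration by $\deg_\Dx$: picking the minimal $k_0$ with nonvanishing $\Dx^{k_0}$-slice, you observe that $P(\x^{k_0})$ collapses to $k_0!\sum_i p_{i,k_0}\x^i$, which already fills column~$k_0$ of $\tM^P_{d,r}$ with something nonzero. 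This is a genuinely shorter route to injectivity and avoids the auxiliary lemma entirely.

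What the paper's approach buys is structural: the diagonal decomposition of Lemma~\ref{lem:b-hom-parts} is exactly what drives the subsequent \textsf{Eval} and \textsf{Interpol} algorithms (each diagonal of $\tM^P$ corresponds to one polynomial~$\ell_i$, and the evaluation/interpolation along diagonals is what is implemented via the convolutions~\eqref{eq:convolution}). Your argument proves the isomorphism cleanly but does not by itself suggest how to invert $\operatorname{EvOp}_{d,r}$ efficiently. Your closing remark on characteristic is also on point: the paper's proof needs the interpolation nodes $0,1,\dots,\mu_i$ to be distinct, whereas yours needs $k_0!\neq0$; both hold as soon as the characteristic exceeds~$r$.
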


In order to prove Theorem~\ref{interpol}, we use the filtration~on $W$ defined by the weights $1$ on~$\x$ and $-1$ on~$\Dx$.
The decomposition into homogeneous components of any $P\in W_{d,r}$ only involves weights between $-r$ and~$d$.
It actually admits a special form, to be exploited later, which is described now.

\begin{lemma}\label{lem:b-hom-parts}
The homogeneous decomposition of $P\in W_{d,r}$ is
\[P = \sum_{i=1}^r \ell_{-i}(\x\Dx) \Dx^i + \sum_{i=0}^d \x^i \ell_i(\x\Dx),\]
where the $\ell_i$'s and $\ell_{-i}$'s are polynomials of degree at most $\mu_i:=\min(d-i,r)$ and $\mu_{-i}:=\min(r-i,d)$, respectively.
\end{lemma}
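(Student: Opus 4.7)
The plan is to exploit the $\bZ$-grading on $W=\bK[\x]\langle\Dx\rangle$ induced by assigning weight $+1$ to $\x$ and weight $-1$ to $\Dx$. The commutation relation $\Dx\x-\x\Dx=1$ is homogeneous of weight~$0$ (both sides having weight~$0$), so $W$ is genuinely graded and every $P\in W$ decomposes uniquely as $P=\sum_w P_w$ into homogeneous components. For $P\in W_{d,r}$, any monomial $\x^j\Dx^i$ occurring in its canonical form satisfies $0\le j\le d$ and $0\le i\le r$, hence has weight $w=j-i\in\{-r,\dots,d\}$; only the corresponding components of~$P$ can be nonzero.

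The core of the argument is then to describe each $P_w$ explicitly. Writing $\theta=\x\Dx$, a one-line induction on~$i$ based on $\Dx\x=\x\Dx+1$ (or, equivalently, the observation that both sides act as the falling factorial $(k)_i$ on~$\x^k$, together with the faithfulness of the $W$-action on~$\bK[\x]$) establishes the key identity $\x^i\Dx^i=(\theta)_i=\theta(\theta-1)\cdots(\theta-i+1)$. For $w\ge 0$, the weight-$w$ monomials available in $P$ are the $\x^{i+w}\Dx^i$ with $0\le i\le \mu_w=\min(d-w,r)$, and each of them rewrites as $\x^w(\theta)_i$. Since $(\theta)_0,\dots,(\theta)_{\mu_w}$ is a basis of the $\bK$-vector space of polynomials in $\theta$ of degree at most~$\mu_w$, the component $P_w$ reads $\x^w\ell_w(\theta)$ for a unique $\ell_w\in\bK[\theta]$ with $\deg\ell_w\le\mu_w$. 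Symmetrically, for $w=-i<0$ the allowed monomials are $\x^k\Dx^{k+i}$ with $0\le k\le\mu_{-i}=\min(r-i,d)$, and the factorization $\x^k\Dx^{k+i}=(\x^k\Dx^k)\Dx^i=(\theta)_k\Dx^i$ yields $P_{-i}=\ell_{-i}(\theta)\Dx^i$ with $\deg\ell_{-i}\le\mu_{-i}$.

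Summing the components over $-r\le w\le d$, and grouping by the sign of~$w$, reproduces exactly the decomposition stated in the lemma. I do not anticipate any genuine obstacle: the grading argument is transparent once the weight-$0$ nature of the commutation relation is noted, and the only mildly technical ingredient is the identity $\x^i\Dx^i=(\theta)_i$, which is entirely standard. The remainder is pure bookkeeping on which monomials fall into which graded piece.
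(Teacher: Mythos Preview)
Your argument is correct and follows essentially the same route as the paper: both proofs split the canonical monomials $\x^j\Dx^i$ according to the sign of the weight $j-i$, factor each monomial as $\x^{j-i}(\x^i\Dx^i)$ or $(\x^j\Dx^j)\Dx^{i-j}$, and then invoke the fact that $\x^k\Dx^k$ is a polynomial of degree~$k$ in $\theta=\x\Dx$. Your presentation is slightly more explicit about the grading and the falling-factorial identity $\x^i\Dx^i=(\theta)_i$, but the underlying mechanism is identical.
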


\begin{proof}
Let $P$ be $\sum_{i,j} p_{i,j} \x^j \Dx^i$.
Then $P$ decomposes as the sum of $\sum_{i>j} p_{i,j} (\x^j \Dx^j) \Dx^{i-j}$ and of $\sum_{i \leq j} \x^{j-i} p_{i,j} (\x^i \Dx^i)$.
Here $p_{i,j}$ is zero if $i>r$ or if $j>d$, therefore $P$ is equal to  
\begin{equation}\label{decomp}
	\sum_{s=1}^r \left(\sum_{j=0}^{\mu_{-s}} p_{j+s,j} \x^j \Dx^j \right) \Dx^s + \sum_{t=0}^d \x^t \left(\sum_{i=0}^{\mu_t} p_{i,t+i} \x^i \Dx^i \right).
\end{equation}	
Since any $\x^i \Dx^i$ can be written as a polynomial of degree~$i$ in $\x\Dx$, the conclusion follows by expressing each parenthesis in~\eqref{decomp} as a polynomial in $\x\Dx$.
\end{proof}

\begin{proof}[of Th.~\ref{interpol}]
Since $\dim_{\bK}W_{d,r}$ is $\dim_{\bK}\bK^{(d+1)\times(r+1)}$, it suffices to show that $\operatorname{EvOp}_{d,r}$ is injective.
Let $P$ in~$W_{d,r}$ be such that $\tM^P_{d,r}=0$, or equivalently $P(\x^k) \bmod \x^{d+1} = 0$ for all $0\leq k \leq r$.
The decomposition of~$P\in W_{d,r}$ in Lemma~\ref{lem:b-hom-parts} enables one to evaluate it easily at~$\x^k$ for~$k\leq r$:
\begin{equation}\label{evaluation}
P\bigl(\x^k\bigr) = \sum_{i=1}^k \frac{k!}{(k-i)!} \ell_{-i}(k-i) \x^{k-i} + \sum_{i=0}^d \ell_{i}(k) \x^{k+i}.
\end{equation}
Since $P(\x^k)\bmod\x^{d+1}=0$ for $k\leq r$, Eq.~\eqref{evaluation} implies:
\begin{itemize}
	\item $\ell_i(k)=0$ if $0\leq i\leq d$, $0\leq k\leq r$, and $k+i\leq d$,
	\item $\ell_{-i}(k-i)=0$ if $1\leq i\leq k$, $0\leq k\leq r$, and $k-i\leq d$.
\end{itemize}
These equalities show that $\ell_i(0),\dots,\ell_i\bigl(\min(d-i,r)\bigr)$ are zero for $0\leq i\leq d$ and that $\ell_{-i}(0),\dots,\ell_{-i}\bigl(\min(r-i,d)\bigr)$ are zero for $1\leq i\leq r$.
Finally, Lagrange interpolation and the degree bounds in Lemma~\ref{lem:b-hom-parts} imply that all the polynomials $\ell_i$ and $\ell_{-i}$ are identically zero.
Thus, $P$~is~0.
\end{proof}

A direct use of the ideas of this subsection would now end the proof of Theorem~\ref{theo:cost-MulWeyl};
the corresponding algorithm would first compute the polynomials $\ell_i$ and $\ell_{-i}$, before evaluating them on $0,1,\ldots$.
By the following next two subsections, we shall propose a better solution, avoiding a logarithmic factor and hiding a smaller constant in the $\bigOsoft({\cdot})$ term.

\subsubsection{Evaluation step}

Here we focus on Step~1 of Algorithm \textsf{MulWeyl}, which is an instance of the task of computing the matrix $\tM^P_{m,n}$ for given $P = \sum_{i=0}^r \sum_{j=0}^d p_{i,j} \x^j \Dx^i$  in $W$ and integers $m \geq d,n \geq r$.
The announced better approach makes use of Algorithm \textsf{Eval} in Fig.~\ref{fig:AlgoEval}, which is based on the following observation:
Let $0 \leq k \leq n$.
Then we have the identities
\begin{align*}
P \left( \x^k \right) & = \sum_{i=0}^{\min(r,k)} \sum_{j=0}^{d} p_{i,j} \frac{k!}{(k-i)!} \x^{k+j-i} \\
  & = k! \, \x^k \Biggl( \sum_{\ell = - \min(r,k)}^d  \biggl(\sum_{i=\max(0,-\ell)}^{\min(r,d-\ell,k)} \frac{p_{i,i+\ell}}{(k-i)!}\biggr) \,\x^\ell\Biggr).
\end{align*}
Therefore, for $-r \leq \ell \leq d$ and  $0 \leq k \leq n$, the coefficient~$(S_\ell)_k$ of~$\x^k$ in the polynomial product
\begin{equation} \label{eq:convolution} S_\ell =  \biggl(\sum_{i=\max(0,-\ell)}^{\min(r,d-\ell)} p_{i,i+\ell}\x^i\biggr) \, \biggl(\sum_{j=0}^n \frac{\x^j}{j!} \biggr)\end{equation}
gives the coefficient of $\x^\ell$ in $\bigl(k!\,\x^k\bigr)^{-1}P(\x^k)$.
Thus the coefficients $(S_\ell)_k$ for $\max(0,-\ell) \leq k \leq \min(m-\ell,n)$ of $S_\ell$ are, up to factorials, the coefficients on a certain diagonal of the matrix $\tM^P_{m,n}$, the other diagonals of $\tM^P_{m,n}$ being zero. 

\begin{figure}[ht]
  \begin{center}
    \fbox{\begin{minipage}{8cm}
  \begin{center}\textsf{Eval}($P$) \end{center}
    \textbf{Input:}    $P \in W_{d,r}, \, m \geq d,\,n\geq r$.\\
     \textbf{Output:} $\tM^P_{m,n}$.\\[-4.5mm]
\begin{tabbing}
1. For each $-r \leq \ell \leq d$, compute $S_\ell \bmod \x^{\min(m-\ell,n)+1}$ \\
\qquad by using Eq.~\eqref{eq:convolution}. \\
2. Initialize $M$ to be an $(m+1) \times (n+1)$ zero matrix. \\
3. For $-r \leq \ell \leq d$ and $\max(0,-\ell) \leq k \leq \min(m-\ell,n)$,\\
\qquad$M_{k+\ell,k} := k! \, (S_\ell)_k$.
\end{tabbing}
\end{minipage}
    }\end{center}
\vskip-10pt
  \caption{Evaluation in $\bK[\x]\langle \Dx \rangle$.}
  \label{fig:AlgoEval}
\end{figure}

\begin{proposition}\label{prop:Eval}
Algorithm \textsf{Eval} computes $\tM^P_{m,n}$ in\/ $\sM(mn)+\bigO(mn)$~ops.
\end{proposition}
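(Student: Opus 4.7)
The plan is to verify correctness using Eq.~\eqref{eq:convolution} and the discussion preceding it, which already shows that $k!\,(S_\ell)_k$ equals the $(k+\ell,k)$-entry of~$\tM^P_{m,n}$; Steps~2 and~3 of the algorithm thus deposit exactly these scaled values in the correct positions of an initially zero matrix, all other entries being~$0$.

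For the complexity estimate, Steps~2 and~3 clearly contribute $\bigO(mn)$~ops (including an $\bigO(n)$ precomputation of the factorials $0!,\ldots,n!$ used to rescale the output). The real work is in Step~1, and the key observation I would make is that, for each $\ell\in\{-r,\ldots,d\}$, only
\[
t_\ell \;:=\; \min(m-\ell,n)-\max(0,-\ell)+1
\]
consecutive coefficients of~$S_\ell$ are actually needed. After factoring out the common $\x^{\max(0,-\ell)}$ from $L_\ell(\x)=\sum_i p_{i,i+\ell}\x^i$, these are the first $t_\ell$ coefficients of an ordinary polynomial product, which a short (truncated) multiplication produces in $\bigO\bigl(\sM(t_\ell)\bigr)$~ops.

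Summing over~$\ell$, the crucial counting fact is that $\sum_\ell t_\ell$ equals the number of pairs~$(i,k)$ with $0\leq i\leq m$, $0\leq k\leq n$, and $i-k\in[-r,d]$, and so is bounded by $(m+1)(n+1)$. Invoking the super-additivity property $\sM(a)+\sM(b)\leq\sM(a+b)$---a direct consequence of the paper's standing hypothesis that $\sM(n)/n$ is non-decreasing---together with $\sM(cn)\in\bigO\bigl(\sM(n)\bigr)$, I would then conclude
\[
\sum_{\ell=-r}^{d}\sM(t_\ell) \;\leq\; \sM\!\left(\sum_{\ell=-r}^{d} t_\ell\right) \;\leq\; \sM\bigl((m+1)(n+1)\bigr) \;\in\; \bigO\bigl(\sM(mn)\bigr),
\]
which, combined with the $\bigO(mn)$ overhead of Steps~2 and~3, yields the claim.

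The main obstacle I anticipate is justifying the uniform $\bigO\bigl(\sM(t_\ell)\bigr)$ bound in the regime where $L_\ell$ has more non-zero coefficients than $t_\ell$ (which can happen when $|\ell|$ is small and $m$ is much smaller than $n$); there one must also pre-truncate $L_\ell$, but the short product of two length-$t_\ell$ inputs still meets the required cost.
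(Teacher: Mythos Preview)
Your argument is correct and follows essentially the same route as the paper's proof: your quantity~$t_\ell$ is precisely the paper's ``size~$s_\ell$ of the corresponding diagonal of~$\tM^P_{m,n}$,'' and both proofs bound $\sum_\ell \sM(t_\ell)$ by $\sM\bigl(\sum_\ell t_\ell\bigr)$ via super-additivity of~$\sM$, with $\sum_\ell t_\ell$ controlled by the total number of matrix entries. Your added discussion of pre-truncating~$L_\ell$ and factoring out $\x^{\max(0,-\ell)}$ makes explicit a detail the paper leaves implicit in its one-line claim that ``the computation of~$S_\ell$ can be done in~$\sM(s_\ell)$.''
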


\begin{proof}
The series $\exp(\x) \bmod \x^{n+1}$ and the factorials 1, \dots, $n!$ are computed by recurrence relations in $\bigO(n)$~ops.
The computation of $S_\ell$ can be done in~$\sM(s_\ell)$ for the size~$s_\ell$ of the corresponding diagonal of $\tM^P_{m,n}$.
Summing over~$\ell$ and appealing to properties of~$\sM$ leads to $\sum_\ell\sM(s_\ell)\leq\sM\bigl(\sum_\ell s_\ell\bigr)$ $\leq\sM(mn)+\bigO(mn)$, then to the announced complexity.
\end{proof}

\subsubsection{Interpolation step} 

Given a $(d+1)\times (r+1)$ matrix $M$, Step~3 of Algorithm \textsf{MulWeyl} computes the only operator $P \in W_{d,r}$ satisfying $\tM^{P}_{d,r}=M$.
This is done by inverting Eq.~\eqref{eq:convolution}.
The resulting algorithm is described in Fig.~\ref{fig:AlgoInterpol}.
A similar analysis to that of algorithm \textsf{Eval} leads to the estimate in Proposition~\ref{prop:Interpol}.
\begin{figure}[ht]
  \begin{center}
    \fbox{\begin{minipage}{8cm}
  \begin{center}\textsf{Interpol}($M$) \end{center}
      \textbf{Input:} $M \in \bK^{(d+1)\times (r+1)}$.\\
     \textbf{Output:} $P \in W_{d,r}$ such that $\tM^P_{d,r} = M$.\\[-4.5mm]
\begin{tabbing}
1. Divide the $k$th column of $M$ by $k!$.\\
2. For each $-r \leq \ell \leq d$, compute the product $T_\ell={}$\\
$\biggl(\sum\limits_{k=\max(0,-\ell)}^{\min(d-\ell,r)} M_{k+\ell,k} \x^k\biggr) \exp(-\x) \bmod \x^{\min(d-\ell,r)+1}$.\\
3. Return $\displaystyle \sum\limits_{i=0}^r \; \sum\limits_{\ell = - \min(i,r)}^{\min(d-i,r)} (T_\ell)_i \x^{\ell+i} \Dx^i$.
\end{tabbing}
\end{minipage}
    }\end{center}
\vskip-10pt
  \caption{Interpolation in $\bK[\x]\langle \Dx \rangle$.}
  \label{fig:AlgoInterpol}
\end{figure}

\begin{proposition}\label{prop:Interpol}
\textsf{Interpol} computes $P$ in $\sM(dr)+\bigO(dr)$~ops.
\end{proposition}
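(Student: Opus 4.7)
The plan is to mirror the correctness-then-complexity structure of the analysis of \textsf{Eval} (Proposition~\ref{prop:Eval}), exploiting the fact that multiplication by $\exp(\x)$ modulo a power of~$\x$ is invertible, with inverse $\exp(-\x)$.

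For correctness, I would start from the identity established in the analysis of \textsf{Eval}: for $M = \tM^P_{d,r}$ and $-r \leq \ell \leq d$, the $\ell$-th diagonal entry $M_{k+\ell,k}$ equals $k!\,(S_\ell)_k$, where $S_\ell$ is defined by Eq.~\eqref{eq:convolution} with $m=d$, $n=r$. Thus after Step~1 the $\ell$-th diagonal of~$M$ coincides with the expansion of $S_\ell$ modulo $\x^{\min(d-\ell,r)+1}$. By Eq.~\eqref{eq:convolution}, $S_\ell$ is congruent to the product of $\exp(\x)$ and the polynomial $Q_\ell := \sum_{i=\max(0,-\ell)}^{\min(d-\ell,r)} p_{i,i+\ell}\,\x^i$ modulo $\x^{r+1}$. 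Since $\exp(\x)$ is invertible modulo any power of~$\x$, with inverse $\exp(-\x)$, Step~2 recovers $Q_\ell$ modulo $\x^{\min(d-\ell,r)+1}$; as $Q_\ell$ itself has degree at most $\min(d-\ell,r)$, it is recovered exactly, so that $(T_\ell)_i = p_{i,i+\ell}$ for all admissible $(i,\ell)$. Step~3 then reassembles these coefficients into~$P$.

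For the complexity, I would precompute $\exp(-\x)$ modulo $\x^{\min(d,r)+1}$ by the linear recurrence $e_{k+1} = -e_k/(k+1)$ in $\bigO(\min(d,r))$~ops., and the factorials $0!,\ldots,r!$ in $\bigO(r)$~ops., so that Step~1 costs $\bigO(dr)$. In Step~2, each truncated product has size bounded by $s_\ell := \min(d-\ell,r)-\max(0,-\ell)+1$ and thus costs $\sM(s_\ell)+\bigO(s_\ell)$; summing over $\ell$ and invoking the super-additivity of $\sM$ implied by the super-linearity assumption yields $\sum_\ell \sM(s_\ell) \leq \sM\bigl(\sum_\ell s_\ell\bigr) \leq \sM\bigl((d+1)(r+1)\bigr)$, which simplifies to $\sM(dr)+\bigO(dr)$ by the very argument used in the proof of Proposition~\ref{prop:Eval}. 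Step~3 contributes only $\bigO(dr)$ elementary operations, completing the count.

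The one delicate point will be verifying that knowledge of $S_\ell$ modulo $\x^{\min(d-\ell,r)+1}$ truly suffices to recover $Q_\ell$: this holds precisely because the bidegree constraints $0\leq i\leq r$ and $0\leq i+\ell\leq d$ force $\deg Q_\ell \leq \min(d-\ell,r)$, so no information is truncated. Everything else is routine bookkeeping along the diagonals of the $(d+1)\times(r+1)$ matrix and requires no ideas beyond those already used for \textsf{Eval}.
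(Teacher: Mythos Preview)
Your proposal is correct and follows precisely the approach the paper itself indicates: the paper's entire argument for this proposition is the single sentence ``A similar analysis to that of algorithm \textsf{Eval} leads to the estimate in Proposition~\ref{prop:Interpol},'' and you have carried out exactly that analogous analysis, inverting the convolution~\eqref{eq:convolution} via multiplication by $\exp(-\x)$ and summing the per-diagonal costs just as in the proof of Proposition~\ref{prop:Eval}.
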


\subsection{Comparison of algorithms for $\OMul nn\Dx$}

Algorithms \textsf{Mul}$_\Dx$ in Fig.~\ref{fig:JorisAlgo-Laurent} and \textsf{MulWeyl} in Fig.~\ref{fig:AlgoMulWeyl} follow the same scheme:
construction of evaluation matrices associated to $A$ and~$B$; product of these matrices; reconstruction of~$C$ by interpolation from it.
But they  differ in the way to do this, and \textsf{MulWeyl} can be viewed as an improvement on~\textsf{Mul}$_\Dx$:
the matrices computed by \textsf{MulWeyl} are submatrices of $M^B$ and~$M^A$ in Algorithm \textsf{Mul}$_\Dx$, as will be proved in~\cite{LongVersion}.
Taking accurate sizes into account for $\OMul nn\Dx$, the dominant matrix-product problem drops from $\MMul {6n+1}{4n+1}{2n+1}$ to $\MMul {2n+1}{3n+1}{2n+1}$.
Estimate~\eqref{eq:naive-abc} yields the number~12 in the last column of Table~\ref{table:MM}.
Observing that the product at Step~2 of \textsf{MulWeyl} reduces to one instance of $\MMul {2n+1} {2n+1} {2n+1}$ and one of $\MMul{n}{n}{n}$, and appealing to Strassen's formula again, we obtain $7+1=8$ block products, as given on the last row of Table~\ref{table:MM}.

\section{Product in characteristic ${}>0$}
\label{sec:positive-char}

As already pointed out, the evaluation-interpolation algorithms of  Sections~\ref{sec:equiv} and~\ref{sec:better-constants} remain valid when the characteristic $p$ of $\bK$ is positive and sufficiently large, but they fail to work in small characteristic.
For instance, \textsf{MulWeyl} solves Problem~$\OMul nn\Dx$ for characteristic~$p > 3n$.

In this section, we provide an algorithm of different nature which proves that, in characteristic~$p$, the product of two operators of bidegree $(n,n)$ either in $\bK[\x]\langle\Tx\rangle$ or in $\bK[\x]\langle\Dx\rangle$ can be computed in $\bigOsoft(p n^2)$~ops.
For small~$p$, this result is nearly optimal, since it is softly linear in the output size.

Up to $\bigOsoft(n^2)$ additional~ops., multiplication in $\bK[\x]\langle\Dx  \rangle$ can be reduced to multiplication in $\bK[\x]\langle\Tx\rangle$, as explained in~\S\ref{EquivDandTheta}.
Thus, we focus on Problem~$\OMul nn\Tx$.

Our algorithm \textsf{Mul}$_{\Tx,p}$ for multiplication in $ \bK[\x]\langle \Tx  \rangle$ is given in Fig.~\ref{fig:pAlgo}.
It is based on the key fact that $\Tx$ and~$\x^p$ commute in characteristic $p$.
This is used in Step~2, which reduces the product in $\bK[\x]\langle \Tx \rangle$ to several products in the commutative polynomial ring $\bK[\x^p,\Tx]$.

\begin{figure}[ht]
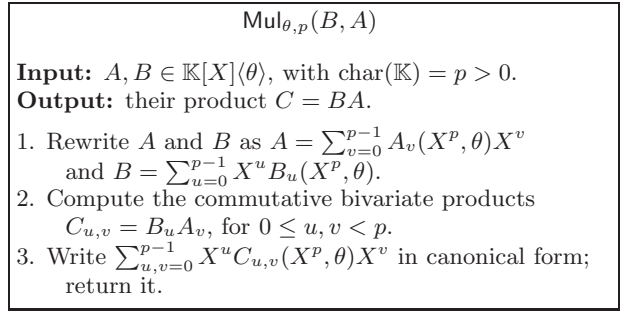

  \begin{center}
    \fbox{\begin{minipage}{7.8cm}
  \begin{center}\textsf{Mul}$_{\Tx,p}$($B,A$) \end{center}
      \textbf{Input:} $A,B \in \bK[\x] \langle \Tx \rangle$, with $\text{char}(\bK) = p> 0$.\\
     \textbf{Output:} their product $C=BA$.\\[-4.5mm]
        \begin{tabbing}
1. Rewrite $A$ and $B$ as $A= \sum_{v=0}^{p-1} A_v(\x^p,\Tx) \x^v$ \\
\qquad and $B= \sum_{u=0}^{p-1} \x^u B_u(\x^p,\Tx)$.\\
2. Compute the commutative bivariate products \\ \qquad $C_{u,v}=B_u A_v$, for $0\leq u,v <p$.\\
3. Write $\sum_{u,v=0}^{p-1} \x^u C_{u,v} (\x^p,\Tx) \x^v$ in canonical form; \\
\qquad return it.
     \end{tabbing}
      \end{minipage}
    }\end{center}
\vskip-10pt
  \caption{Product of differential operators in $\Tx$ over a field of positive characteristic.}
  \label{fig:pAlgo}
\end{figure}
We now describe proper algorithmic choices that perform each step of \textsf{Mul}$_{\Tx,p}$ in nearly optimal complexity.

Step~1 first rewrites $A$ as $\sum_{v=0}^{p-1} \x^v \tilde{A}_v(\x^p,\Tx)$  and $B$ as $\sum_{u=0}^{p-1} \x^u B_u(\x^p,\Tx)$, where 
$B_u,\tilde{A}_v,\, 0 \leq u,v \leq p-1$ are polynomials in $\bK[\x^p,\Tx]$ of bidegree at most $(\lfloor n/p \rfloor, n)$; this costs no ops.
The commutation $\Tx^j\x^v=\x^v(\Tx+v)^j$ then enables one to rewrite~$A$ as $\sum_{v=0}^{p-1} A_v(\x^p,\Tx) \x^v$, where $A_v(\x^p,\Tx)$ is $\tilde{A}_v(\x^p,\Tx-v)$. 
Thus, each $A_v$ is obtained by computing $\lfloor n/p \rfloor +1$ shifts of polynomials of degree at most $n$.
By Lemma~\ref{cost-results}(a), this results in $\bigO\bigl(n \, \sM(n) \log n\bigr)$ ops.\ for Step~1.

Each product in Step~2 involves polynomials in $\bK[\x^p,\Tx]$ of bidegree at most $(\lfloor n/p \rfloor,n)$.
Thus using Lemma~\ref{cost-results}(d), Step~2 is performed in
$\bigO\bigl(p^2 \, \sM(n^2/p)\bigr) \subseteq \bigO\bigl(p \, \sM(n^2)\bigr)$~ops.
Note that $C_{u,v}(\x,\y)$ has bidegree at most $(2\lfloor n/p \rfloor, 2n)$.

To perform Step~3, each $C_{u,v}(\x^p,\Tx) \x^v$ is first rewritten as
$\x^v \tilde{C}_{u,v}(\x^p,\Tx)$ by computing $2\,\lfloor n/p \rfloor +1$
shifts of polynomials of degree at most $2n$. This can be done in $\bigO\bigl(pn\, \sM(n) \log n\bigr)$ ops.
{}Finally,  $\bigO(p n^2)$~ops. are sufficient to put $C = \sum_{u=0}^{p-1} \x^u \sum_{v=0}^{p-1} \x^v \tilde{C}_{u,v}(\x^p,\Tx)$ in canonical form.

Summarizing, we have just proved:
\begin{theorem}
Let\/ $\bK$ be a field of characteristic $p$ and let $D$ be one of the operators $\Dx, \Tx$.
Then, two operators of bidegree $(n,n)$ in $\bK[\x]\langle D\rangle$ can be multiplied in $\bigO\bigl(p \, \sM(n^2) + pn \, \sM(n) \log n\bigr)$~ops., thus in $\bigOsoft(pn^2)$~ops.\ when FFT is used.
\end{theorem}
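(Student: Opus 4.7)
The plan is to exploit the fact that in characteristic~$p$ the operator~$\Tx$ commutes with~$\x^p$, so that the noncommutative product in~$\bK[\x]\langle\Tx\rangle$ can be reduced to a small number of \emph{commutative} polynomial products in~$\bK[\x^p,\Tx]$. First I would dispose of the $\Dx$~case by converting to the $\Tx$-representation using the equivalence of~\S\ref{EquivDandTheta}, which adds only a $\bigOsoft(n^2)$ overhead; since the target complexity is already~$\bigOsoft(pn^2)\supseteq\bigOsoft(n^2)$, this reduction is free and it suffices to treat $\OMul nn\Tx$ by bounding each of the three steps of Algorithm~\textsf{Mul}$_{\Tx,p}$.

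For Step~1, split the $\x$-exponent by its residue modulo~$p$: write $A=\sum_{v=0}^{p-1}\tilde A_v(\x^p,\Tx)\,\x^v$ (and similarly for~$B$), where each $\tilde A_v$ has bidegree at most~$(\lfloor n/p\rfloor,n)$ in $(\x^p,\Tx)$; this rearrangement itself has no arithmetic cost. One must then bring each factor~$\x^v$ to the right through all powers of~$\Tx$, using the commutation rule $\Tx^j\x^v=\x^v(\Tx+v)^j$. Each $\tilde A_v$ contains $\lfloor n/p\rfloor+1$ polynomials in~$\Tx$ of degree at most~$n$, each of which undergoes a Taylor shift by~$-v$, so Lemma~\ref{cost-results}(a) bounds the whole step by $\bigO\bigl(n\,\sM(n)\log n\bigr)$.

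For Step~2, the main workload, compute the $p^2$ commutative bivariate products $C_{u,v}=B_u A_v\in\bK[\x^p,\Tx]$. Each factor has bidegree at most $(\lfloor n/p\rfloor,n)$, so Lemma~\ref{cost-results}(d) bounds a single product by $\bigO\bigl(\sM(n^2/p)\bigr)$, for a total of $\bigO\bigl(p^2\,\sM(n^2/p)\bigr)$. The key technical point is to repackage this as $\bigO\bigl(p\,\sM(n^2)\bigr)$ by invoking the super-linearity hypothesis on~$\sM$; I expect this bookkeeping to be the main (mild) obstacle, since it requires observing that $p\cdot\sM(n^2/p)\leq\sM(n^2)$ up to constants.

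For Step~3, reassemble $C=\sum_{u,v}\x^u C_{u,v}(\x^p,\Tx)\,\x^v$. Each right-hand~$\x^v$ is pushed past the $\Tx$-polynomial of~$C_{u,v}$, again via the commutation $\Tx^j\x^v=\x^v(\Tx+v)^j$; these polynomials have degree at most~$2n$ and there are $\bigO(pn)$ of them, so Lemma~\ref{cost-results}(a) gives $\bigO\bigl(pn\,\sM(n)\log n\bigr)$. Accumulating the resulting $\bigO(pn^2)$ monomials into canonical form costs $\bigO(pn^2)$~additions. Summing the three contributions yields $\bigO\bigl(p\,\sM(n^2)+pn\,\sM(n)\log n\bigr)$, which collapses to $\bigOsoft(pn^2)$ under FFT; the $\Dx$~case follows verbatim by the reduction above.
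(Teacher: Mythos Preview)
Your proposal is correct and mirrors the paper's argument step for step: reduce the $\Dx$~case to~$\Tx$ via \S\ref{EquivDandTheta}, then bound the three steps of Algorithm~\textsf{Mul}$_{\Tx,p}$ exactly as you do, including the super-linearity repackaging $p^2\,\sM(n^2/p)\in\bigO\bigl(p\,\sM(n^2)\bigr)$. One cosmetic slip: the free decomposition from canonical form naturally places $\x^v$ on the \emph{left} of~$\tilde A_v$ (so your displayed formula and the phrase ``bring each factor~$\x^v$ to the right'' are slightly at odds), and note that $B$ requires no shift at all since $\x^u$ is already wanted on the left.
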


\section{Experiments}\label{sec:Experiments}

Table~\ref{table:exp} provides timings of calculations in \textsf{magma} by implementations of several algorithms and algorithmic variants.
Each row corresponds to calculations on the same pair of randomly generated operators in bidegree $(n,n)$, for $n=10\cdot2^k$.
Coefficients are taken randomly from~$\bZ/p\bZ$ when~$p>0$, the prime used being
$p_1=65521$ (largest prime to fit on 16~bits) and $p_2=4294967291$ (largest prime to fit on 32~bits).
When~$p=0$, computations are performed over~$\bQ$, with random integer input coefficients on 16~bits.

\begin{table}[ht]
\begin{small}
\begin{center}
\setlength{\tabcolsep}{2.25pt}
\begin{tabular}{rr|rrr|rrr|rrrrr}
$p$ & $k$ & S & B & BZ & vdH & Iter & Tak & Rec & Int & BZI & vdHI \\
\hline
$p_1$ & 3 & 0.25 & 0.26 & 0.25 & 0.39 & 0.32 & 1.23 & 0.01 & 0.64 & 5.22 & 59.8 \\
$p_1$ & 4 & 0.95 & 0.97 & 0.95 & 1.68 & 4.13 & 12.09 & 0.03 & 4.37 & 35.0 & 418 \\
$p_1$ & 5 & 4.08 & 4.11 & 4.34 & 8.10 & 37.2 & 123 & 0.20 & 30.2 & 240 & 2793 \\
$p_1$ & 6 & 21.4 & 21.1 & 22.2 & 45.1 & 397 & 1407 & 1.56 & 209 & 1692 & $\infty$ \\
$p_1$ & 7 & 107 & 105 & 104 & 275 & $\infty$ & $\infty$ & 13.3 & 1507 & $\infty$ & $\infty$ \\
\hline
$p_2$ & 3 & 0.50 & 0.63 & 0.62 & 1.08 & 2.25 & 5.61 & 0.08 & 1.10 & 8.00 & 82.2 \\
$p_2$ & 4 & 2.24 & 2.66 & 2.68 & 4.52 & 19.07 & 67.73 & 0.35 & 9.22 & 58.2 & 602 \\
$p_2$ & 5 & 12.2 & 14.5 & 14.1 & 24.4 & 187 & 926 & 1.63 & 75.6 & 420 & $\infty$ \\
$p_2$ & 6 & 88.1 & 111 & 114 & 172 & 2604 & $\infty$ & 9.40 & 770 & 3146 & $\infty$ \\
$p_2$ & 7 & 1961 & 2452 & 2633 & $\infty$ & $\infty$ & $\infty$ & 59.1 & $\infty$ & $\infty$ & $\infty$ \\
\hline
0 & 3 & 9.93 & 12.0 & 11.3 & 28.4 & 6.99 & 24.3 & 0.07 & 0.93 & 16.9 & 309 \\
0 & 4 & 128 & 164 & 164 & 498 & 118 & 725 & 0.27 & 6.89 & 204 & $\infty$ \\
0 & 5 & 2164 & 2737 & 2725 & $\infty$ & 2492 & $\infty$ & 4.37 & 51.4 & 3172 & $\infty$
\end{tabular}
\caption{\label{table:exp}Timings on input of bidegree~$(10\cdot2^k,10\cdot2^k)$.}
\end{center}
\end{small}
\end{table}
\vskip-7pt

The calculations were performed on a Power Mac G5 with two CPUs at 2.7\,GHz, 512\,kB of L2 Cache per CPU, 2.5\,GB of memory, and a bus of speed 1.35\,GHz.
The system used was Mac OS X 10.4.10, running Magma V2.13-15.
Computations killed after one hour are marked~$\infty$.

We provide several variants of our algorithm (S, B, and BZ), as well as various others:
{\bf S:}
direct call to \textsf{magma}'s matrix multiplication in order to compute
$\tM^B_{2n,3n} \tM^A_{3n,2n}$;
{\bf B and BZ:}
block decomposition into $n\times n$ matrices before calling \textsf{magma}'s  matrix multiplication on, respectively, 11~block products (using Strassen's algorithm) and by 8~block products (taking the nullity of 2~blocks into account as well);
{\bf vdH:}
Van der Hoeven's algorithm, as described in~\cite{vdHoeven02}, and optimized as much as possible as the implementation~S above;
{\bf Iter and Tak:}
iterative formulas \eqref{eq:iterative} and \eqref{eq:takayama};
{\bf Rec:}
\textsf{magma}'s multiplication of a $(2n+1)\times(3n+1)$-matrix by a $(3n+1)\times(2n+1)$-matrix, that is, essentially all the linear algebra performed in variant~S (in practice, almost always in the cubic regime for the objects of interest);
{\bf Int:}
fully interpreted implementation of Strassen's product with cubic loop under a suitable threshold;
{\bf BZI and vdHI:}
variants of the implementations BZ and vdH (with evaluation-interpolation steps improved) in which \textsf{magma}'s product of matrices has been replaced with~Int.

\begin{table}[ht]
\begin{small}
\begin{center}
\setlength{\tabcolsep}{2.5pt}
\begin{tabular}{cc|rr|rrr|rrr}
$p$& & $p_1$& $p_1$& $p_2$& $p_2$& $p_2$&    0 &    0 &    0 \\
$k$& &    3 &    7 &    3 &    5 &    7 &    3 &    4 &    5 \\
\hline
LA & $\bigO\bigl(\MM(n)\bigr)$ &  4\% & 13\% & 17\% & 16\% & 39\% & 36\% & 41\% & 52\% \\
PP & $\bigO\bigl(n\,\sM(n)\bigr)$ & 13\% & 25\% & 23\% & 23\% & 18\% & 36\% & 33\% & 24\% \\
OM & $\bigO\bigl(n^2\bigr)$ & 38\% & 36\% & 30\% & 27\% & 11\% &  7\% &  6\% &  5\% \\
IO & $\bigO\bigl(n^2\bigr)$ & 46\% & 27\% & 30\% & 33\% & 32\% & 21\% & 20\% & 19\%
\end{tabular}
\caption{\label{table:percentages}Fraction of time spent in matrix product (LA), polynomial products (PP), other matrix operations (OM), and other interpreted operations (IO).}
\end{center}
\end{small}
\end{table}
\vskip-7pt

Comparing the columns Rec and, for instance, S, shows that linear algebra does not take the main part of the calculation time, although its theoretical complexity dominates.
In this regard, we have been very cautious in our implementation to avoid any interpreted quadratic loops.
Still, the result is that those quadratic tasks dominate the computation time.
Details are given in Table~\ref{table:percentages}.
The conclusion is that having implemented the algorithms in an interpreted language tends to parasitize the benchmarks.
For comparison sake, we have also added timings for variants BZI and vdHI that use an interpreted matrix product.
They both show the growth expected in theory, as well as the ratio from~8 to~96 announced in Table~\ref{table:MM}.

\section{Conclusions, future work}

Because of space limitation, various extensions could not be covered here.
More results
on the complexity of non-commutative multiplication of skew polynomials
will be
presented in an upcoming extended version~\cite{LongVersion}.
Topics like multiplication of skew
polynomials with unbalanced degrees and orders, or with sparse support, will be treated there.
The case of
rational (instead of polynomial) coefficients will also be considered.
The methods of this
article extend to multiplication of more general skew polynomials, in one or several variables,
including for instance $q$-recurrences and partial differential operators.

The constants in Table~\ref{table:MM} are all somewhat pessimistic.
Tighter bounds can be obtained by, on the one hand, relaxing the naive assumption~\eqref{eq:naive-abc}, on the other hand, taking advantage of the special shapes (banded, trapezoidal, etc) of the various matrices.

We also plan to provide a lower-level implementation.
Hopefully, the timings would then reflect the theoretical results even better and will be close to those of naked matrix products.

\smallskip\noindent{\bf Acknowledgments.} This work was supported in part by the French National Agency for Research (ANR Project ``Gecko'') and the Microsoft Research-INRIA Joint Centre.
We thank the three referees for their valuable comments.

\scriptsize
%\bibliographystyle{abbrv}
%\bibliography{issac}

\begin{thebibliography}{10}

\bibitem{Bernstein}
D.~J. Bernstein.
\newblock Fast multiplication and its applications.
\newblock To appear in Buhler-Stevenhagen {\it Algorithmic number theory}.

\bibitem{LongVersion}
A.~Bostan, F.~Chyzak, and N.~Le~Roux.
\newblock Skew-polynomial products by evaluation and interpolation.
\newblock In preparation.

\bibitem{LCLMs}
A.~Bostan, F.~Chyzak, Z.~Li, and B.~Salvy.
\newblock Common multiples of linear ordinary differential and difference
  operators.
\newblock In preparation.

\bibitem{BoSc05}
A.~Bostan and {\'E}.~Schost.
\newblock Polynomial evaluation and interpolation on special sets of points.
\newblock {\em Journal of Complexity}, 21(4):420--446, August 2005.

\bibitem{CaKa91}
D.~G. Cantor and E.~Kaltofen.
\newblock On fast multiplication of polynomials over arbitrary algebras.
\newblock {\em Acta Inform.}, 28(7):693--701, 1991.

\bibitem{Ore_algebra}
F.~Chyzak.
\newblock \url{http://algo.inria.fr/chyzak/mgfun.html}.

\bibitem{CoWi90}
D.~Coppersmith and S.~Winograd.
\newblock Matrix multiplication via arithmetic progressions.
\newblock {\em Journal of Symbolic Computation}, 9(3):251--280, Mar. 1990.

\bibitem{GaGe97}
J.~\gathen{von zur} Gathen and J.~Gerhard.
\newblock Fast algorithms for {T}aylor shifts and certain difference equations.
\newblock In {\em Proceedings of ISSAC'97}, pages 40--47, New York, 1997. ACM
  Press.

\bibitem{GaGe99}
J.~\gathen{von zur} Gathen and J.~Gerhard.
\newblock {\em Modern computer algebra}.
\newblock Cambridge University Press, 1999.

\bibitem{Gerhard00}
J.~Gerhard.
\newblock Modular algorithms for polynomial basis conversion and greatest
  factorial factorization.
\newblock In {\em RWCA'00}, pages 125--141, 2000.

\bibitem{vdHoeven02}
J.~\hoeven{van der} Hoeven.
\newblock F{FT}-like multiplication of linear differential operators.
\newblock {\em Journal of Symbolic Computation}, 33(1):123--127, 2002.

\bibitem{Kaporin04}
I.~Kaporin.
\newblock The aggregation and cancellation techniques as a practical tool for
  faster matrix multiplication.
\newblock {\em Theor. Comput. Sci.}, 315(2-3):469--510, 2004.

\bibitem{ScSt71}
A.~Sch\"onhage and V.~Strassen.
\newblock {S}chnelle {M}ultiplikation gro\ss er {Z}ahlen.
\newblock {\em Computing}, 7:281--292, 1971.

\bibitem{Strassen69}
V.~Strassen.
\newblock {G}aussian elimination is not optimal.
\newblock {\em Numerische Mathematik}, 13:354--356, 1969.

\bibitem{Kan}
N.~Takayama.
\newblock \url{http://www.math.kobe-u.ac.jp/KAN/}.

\end{thebibliography}

\end{document}